\numberwithin{equation}{section}
\theoremstyle{plain}
\newtheorem{proposition}{Proposition}
\def\R{\mathds{R}}
\def\ga{\mbox{Ga}}
\def\iga{\mbox{IGa}}
\def\siga{\mbox{sqrtIGa}}
\def\be{\mbox{Be}}
\def\bin{\mbox{Bin}}
\def\geo{\mbox{Geo}}
\def\no{\mbox{N}}
\def\po{\mbox{Po}}
\def\pg{\mbox{Pg}}
\def\st{\mbox{St}}
\def\sno{\mbox{SkewN}}
\def\un{\mbox{Un}}
\def\bga{\mbox{BGa}}
\def\dir{\mbox{Dir}}
\def\tri{\mbox{Tri}}
\def\bebin{\mbox{BeBin}}
\def\dbb{\mbox{DoubleBB}}
\def\bbb{\mbox{BBB}}
\def\E{\mathds{E}}
\def\V{\mbox{Var}}
\def\Cov{\mbox{Cov}}
\def\Cor{\mbox{Corr}}
\def\p{\mbox{p}}
\def\b{\mbox{b}}
\def\P{\mbox{P}}
\def\Cr{\mbox{Corr}}
\def\Cv{\mbox{Cov}}
\def\pt{\mbox{PT}}
\def\rpt{\mbox{rPT}}
\def\dpt{\mbox{dPT}}
\def\bep{\mbox{BeP}}
\def\d{\mathrm{d}}
\def\rest{\mbox{rest}}
\def\data{\mbox{data}}
\def\tr{\mbox{tr}}
\def\bc{{\bm c}}
\def\bk{{\bm k}}
\def\bM{{\bm m}}
\def\bt{{\bm t}}
\def\bs{{\bm s}}
\def\bu{{\bm u}}
\def\bv{{\bm v}}
\def\bx{{\bm x}}
\def\by{{\bm y}}
\def\bz{{\bm z}}
\def\bC{{\bm C}}
\def\bS{{\bm S}}
\def\bT{{\bm T}}
\def\bU{{\bm U}}
\def\bV{{\bm V}}
\def\bW{{\bm W}}
\def\bX{{\bm X}}
\def\bY{{\bm Y}}
\def\bZ{{\bm Z}}
\def\bp{{\bm p}}
\def\bw{{\bm w}}
\def\bn{{\bm n}}
\def\bzero{{\bm 0}}
\def\simind{\stackrel{\mbox{\scriptsize{ind}}}{\sim}}
\def\simiid{\stackrel{\mbox{\scriptsize{iid}}}{\sim}}
\def\simas{\stackrel{\mbox{\scriptsize{a.s.}}}{\sim}}
\newcommand{\bbeta}{\boldsymbol{\beta}}
\newcommand{\blambda}{\boldsymbol{\lambda}}
\newcommand{\bzeta}{\boldsymbol{\zeta}}
\newcommand{\bLambda}{\boldsymbol{\Lambda}}
\newcommand{\bgamma}{\boldsymbol{\gamma}}
\newcommand{\bdelta}{\boldsymbol{\delta}}
\newcommand{\btheta}{\boldsymbol{\theta}}
\newcommand{\brho}{\boldsymbol{\rho}}
\newcommand{\bmu}{\boldsymbol{\mu}}
\newcommand{\bnu}{\boldsymbol{\nu}}
\newcommand{\bsigma}{\boldsymbol{\sigma}}
\newcommand{\bSigma}{\boldsymbol{\Sigma}}
\newcommand{\bOmega}{\boldsymbol{\Omega}}
\newcommand{\ds}{\displaystyle}
\newcommand{\Ree}{{\rm I}\!{\rm R}}
\newcommand{\Naa}{{\rm I}\!{\rm N}}
\newcommand{\zbar}{\overline{z}}
\newcommand{\wpr}{\mbox{w.pr. }}
\newcommand{\Gstar}{G^\star}
\newcommand{\DP}{\mathcal{DP}}
\newcommand{\DDP}{\mathcal{DDP}}
\newcommand{\DPT}{\mathcal{DPT}}
\newcommand{\PT}{\mathcal{PT}}
\newcommand{\AC}{\mathcal{A}}
\newcommand{\BB}{\mathcal{B}}
\newcommand{\CC}{\mathcal{C}}
\newcommand{\DD}{\mathcal{D}}
\newcommand{\FF}{\mathcal{F}}
\newcommand{\LL}{\mathcal{L}}
\newcommand{\PP}{\mathcal{P}}
\newcommand{\YY}{\mathcal{Y}}
\newcommand{\ZZ}{\mathcal{Z}}
\newcommand{\PY}{\mathcal{PD}}
\newcommand{\law}{\mathcal{L}}
\newcommand{\sigmasib}{\sigma_{\rm si
		\def\ga{\mbox{Ga}}
		\def\iga{\mbox{IGa}}
		\def\siga{\mbox{sqrtIGa}}
		\def\be{\mbox{Be}}
		\def\bin{\mbox{Bin}}
		\def\geo{\mbox{Geo}}
		\def\no{\mbox{N}}
		\def\po{\mbox{Po}}
		\def\pg{\mbox{Pg}}
		\def\st{\mbox{St}}
		\def\sno{\mbox{SkewN}}
		\def\un{\mbox{Un}}
		\def\bga{\mbox{BGa}}
		\def\dir{\mbox{Dir}}
		\def\tri{\mbox{Tri}}
		\def\bebin{\mbox{BeBin}}
		\def\dbb{\mbox{DoubleBB}}
		\def\bbb{\mbox{BBB}}
		\def\E{\mathds{E}}
		\def\V{\mbox{Var}}
		\def\Cov{\mbox{Cov}}
		\def\Cor{\mbox{Corr}}
		\def\p{\mbox{p}}
		\def\b{\mbox{b}}
		\def\P{\mbox{P}}
		\def\Cr{\mbox{Corr}}
		\def\Cv{\mbox{Cov}}
		\def\pt{\mbox{PT}}
		\def\rpt{\mbox{rPT}}
		\def\dpt{\mbox{dPT}}
		\def\bep{\mbox{BeP}}
		\def\d{\mathrm{d}}
		\def\rest{\mbox{rest}}
		\def\data{\mbox{data}}
		\def\tr{\mbox{tr}}
		\def\bc{{\bm c}}
		\def\bk{{\bm k}}
		\def\bM{{\bm m}}
		\def\bt{{\bm t}}
		\def\bu{{\bm u}}
		\def\bv{{\bm v}}
		\def\bx{{\bm x}}
		\def\by{{\bm y}}
		\def\bz{{\bm z}}
		\def\bC{{\bm C}}
		\def\bS{{\bm S}}
		\def\bT{{\bm T}}
		\def\bU{{\bm U}}
		\def\bV{{\bm V}}
		\def\bW{{\bm W}}
		\def\bX{{\bm X}}
		\def\bY{{\bm Y}}
		\def\bZ{{\bm Z}}
		\def\bp{{\bm p}}
		\def\bzero{{\bm 0}}
		\def\simind{\stackrel{\mbox{\scriptsize{ind}}}{\sim}}
		\def\simiid{\stackrel{\mbox{\scriptsize{iid}}}{\sim}}
		\newcommand{\bbeta}{\boldsymbol{\beta}}
		\newcommand{\blambda}{\boldsymbol{\lambda}}
		\newcommand{\bzeta}{\boldsymbol{\zeta}}
		\newcommand{\bLambda}{\boldsymbol{\Lambda}}
		\newcommand{\bgamma}{\boldsymbol{\gamma}}
		\newcommand{\bdelta}{\boldsymbol{\delta}}
		\newcommand{\btheta}{\boldsymbol{\theta}}
		\newcommand{\brho}{\boldsymbol{\rho}}
		\newcommand{\bmu}{\boldsymbol{\mu}}
		\newcommand{\bnu}{\boldsymbol{\nu}}
		\newcommand{\bsigma}{\boldsymbol{\sigma}}
		\newcommand{\bSigma}{\boldsymbol{\Sigma}}
		\newcommand{\bOmega}{\boldsymbol{\Omega}}
		\newcommand{\ds}{\displaystyle}
		\newcommand{\Ree}{{\rm I}\!{\rm R}}
		\newcommand{\Naa}{{\rm I}\!{\rm N}}
		\newcommand{\zbar}{\overline{z}}
		\newcommand{\wpr}{\mbox{w.pr. }}
		\newcommand{\Gstar}{G^\star}
		\newcommand{\DP}{\mathcal{DP}}
		\newcommand{\DDP}{\mathcal{DDP}}
		\newcommand{\DPT}{\mathcal{DPT}}
		\newcommand{\PT}{\mathcal{PT}}
		\newcommand{\AC}{\mathcal{A}}
		\newcommand{\BB}{\mathcal{B}}
		\newcommand{\CC}{\mathcal{C}}
		\newcommand{\DD}{\mathcal{D}}
		\newcommand{\FF}{\mathcal{F}}
		\newcommand{\LL}{\mathcal{L}}
		\newcommand{\PP}{\mathcal{P}}
		\newcommand{\YY}{\mathcal{Y}}
		\newcommand{\ZZ}{\mathcal{Z}}
		\newcommand{\PY}{\mathcal{PD}}
		\newcommand{\law}{\mathcal{L}}
		\newcommand{\sigmasib}{\sigma_{\rm sib}}
		\newcommand{\bpTt}{\tilde{\bm{p}}_{\Theta^*,\boldsymbol{\theta^*}}}
		\newcommand{\bpi}{\boldsymbol \pi}
		\newcommand{\bpT}{\mathbf{\tilde p}_\Theta}}}
\newcommand{\bpTt}{\tilde{\bm{p}}_{\Theta^*,\boldsymbol{\theta^*}}}
\newcommand{\bpi}{\boldsymbol \pi}
\newcommand{\bpT}{\mathbf{\tilde p}_\Theta}
		\newcommand{\kernel}{\mathcal K}
\begin{document}

\baselineskip=24pt

\title{\bf Importance conditional sampling\\ for Pitman-Yor mixtures}
\author{Antonio Canale$^1$, Riccardo Corradin$^2$, and Bernardo Nipoti$^2$\\[2mm]
{\small $^1$ Department of Statistical Sciences, University of Padova, Italy} \\
{\small $^2$ Department of Economics, Management and Statistics, University of Milano Bicocca, Italy} \\[2mm]
{\small {\tt canale@stat.unipd.it, riccardo.corradin@unimib.it  {\rm and} bernardo.nipoti@unimib.it}}}
\date{}

\maketitle

\allowdisplaybreaks

\begin{abstract}
Nonparametric mixture models based on the Pitman-Yor process represent a flexible tool for density estimation and clustering. Natural generalization of the popular class of Dirichlet process mixture models, they allow for more robust inference on the number of components characterizing the distribution of the data. We propose a new sampling strategy for such models, named importance conditional sampling (ICS), which combines appealing properties of existing methods, including easy interpretability and a within-iteration parallelizable structure. An extensive simulation study highlights the efficiency of the proposed method which, unlike other conditional samplers, shows stable performances for different specifications of the parameters characterizing the Pitman-Yor process.  We further show that the ICS approach can be naturally extended to other classes of computationally demanding models, such as nonparametric mixture models for partially exchangeable data. 
\end{abstract}

\noindent \textbf{Keywords:} Bayesian nonparametrics, Dependent Dirichlet process, Importance conditional sampling, Nonparametric mixtures, Pitman-Yor process, Sampling-importance resampling, Slice sampler.\\

\section{Introduction}
\label{sec:intro}
Bayesian nonparametric mixtures are flexible models for density estimation and clustering, nowadays a well-established modelling option for applied statisticians \citep{Fru19}.  The first of such models to appear in the literature was the Dirichlet process (DP) \citep{Fer73} mixture of Gaussian kernels by \citet{Lo84}, a contribution which paved the way to  the definition of a wide variety of nonparametric mixture models. In recent years, increasing interest has been dedicated to the definition of mixture models based on nonparametric mixing random probability measures that go beyond the DP \citep[e.g.][]{Nie04,Lij05,Lij05b,Lij07,Arg16}. Among these measures, the Pitman-Yor process (PY) \citep{Per92,Pit95} stands out for conveniently combining mathematical tractability, interpretability, and modelling flexibility \citep[see, e.g.,][]{DeB15}.

Let $\bX=(X_1,\dots,X_n)$ be an $n$-dimensional sample of observations defined on some probability space $(\Omega,\mathscr{A},\P)$ and taking values in $\mathds{X}$, and $\mathscr{F}$ denote the space of all probability distributions on $\mathds{X}$. A Bayesian nonparametric  mixture model is a random distribution taking values in $\mathscr{F}$, defined as
\begin{equation}\label{eq:DPmm}
\tilde{f}(x) = \int_{\Theta} \kernel(x;\theta) \d \tilde p(\theta), 
\end{equation} 
where $\kernel(x;\theta)$ is a kernel and $\tilde p$ is a discrete random probability measure. In this paper we focus on  $\tilde p \sim PY(\sigma, \vartheta; P_0)$, that is we assume that $\tilde p$ is distributed as a PY process with discount parameter $\sigma\in[0,1)$, strength parameter $\vartheta>-\sigma$, and diffuse base measure $P_0 \in \mathscr{F}$. The DP is recovered as a special case when $\sigma=0$. 
Model \eqref{eq:DPmm} can alternatively be written in hierarchical form as
\begin{equation}\label{eq:hier_model}
\begin{split}
X_i \mid \theta_i &\simind \kernel(X_i; \theta_i), \qquad i=1,\ldots,n\\
\theta_i \mid \tilde{p} &\simiid \tilde{p},\\
\tilde p &\sim PY(\sigma, \vartheta; P_0).
\end{split} 
\end{equation} 
The joint distribution of $\btheta=(\theta_1, \dots, \theta_n)$ is characterized by the predictive distribution of the PY, which, for any $i=1,2,\ldots$, is given by
\begin{equation}\label{eq:urn}
\P(\theta_{i+1} \in \d t \mid \theta_{1}, \dots, \theta_{i})= \frac{\vartheta + k_{i} \sigma}{\vartheta + i} P_0(\d t) +  \sum_{j=1}^{k_i} \frac{n_j- \sigma}{\vartheta + i} \delta_{\theta_j^*}(\d t ),
\end{equation} 
where $k_i$ is the number of distinct values $\theta_j^*$ observed in the first $i$ draws and $n_j$ is the number of observed $\theta_{l}$, for $l = 1,\ldots,i$, coinciding with $\theta_j^*$, such that $\sum_{j=1}^{k_i} n_j=i$.

Markov chain Monte Carlo (MCMC) sampling methods represent the gold standard for carrying out posterior inference based on nonparametric mixture models. Resorting to the terminology adopted by \citet{Pap08}, most of the existing MCMC sampling methods for nonparametric mixtures can be classified into marginal and conditional, the two classes being characterized by different ways to deal with the infinite-dimensional random probability measure $\tilde p$. While marginal methods rely on the possibility of analytically marginalizing $\tilde p$ out, the conditional ones exploit suitable finite-dimensional summaries of $\tilde p$.  

Marginal methods for nonparametric mixtures were first devised by \citet{Esc88} and \citet{Esc95}, contributions which focused on DP mixtures of univariate Gaussian kernels. 
Extensions of such proposal include the works of \citet{Mul96}, \citet{Mac94}, \citet{Mac98}, \citet{Nea00}, \citet{Bar13}, \citet{Fav13}, and \citet{Lom17}. It is worth noting that, despite being the first class of MCMC methods for Bayesian nonparametric mixtures appeared in the literature, marginal methods are still routinely used in popular packages such as the \texttt{DPpackage}  \citep{Jar11}, the \emph{de facto} standard software for many Bayesian nonparametric models. 
Alternatively, conditional methods rely on the use of summaries---of finite and possibly random dimension---of realizations of $\tilde p$. To this end, the stick-breaking representation for the PY \citep{Pit97} 
turns out to be very convenient. The almost sure discreteness of the PY allows $\tilde{p}$ to be written as an infinite sum of random jumps $\{p_j\}_{j=1}^{\infty}$ occurring at random locations $\{\tilde\theta_j\}_{j=1}^{\infty}$, that is \begin{equation}
\label{eq:infinitesum}
\tilde{p} = \sum_{j=1}^\infty p_j \delta_{\tilde\theta_j}.
\end{equation}
The distribution of the locations is independent of that of the jumps and, while $\tilde\theta_j \simiid P_0$, the distribution of the jumps is characterized by the following construction:
\begin{align}\label{eq:stick}
p_1 &= V_1, \\
p_j &= V_j\prod_{l=1}^{j-1} (1-V_l), \\
V_j &\simind \mbox{Beta}(1-\sigma, \vartheta + j \sigma).
\end{align} 

A first example of conditional approach can be found in \citet{Ish01} and \citet{Ish02}, contributions that consider a fixed truncation of the stick-breaking representation of a large class of random probability measures, and provide a bound for the introduced truncation error. Along similar lines, \citet{Mul98} and \citet{Arb18} make the truncation level of  the DP and the PY, respectively,  random so to make sure that the resulting error is smaller than a given threshold. 
Exact solutions avoiding introducing truncation errors are the slice samplers of \citet{Wal07} and \citet{Kal11}, the improved slice sampler of \citet{Ge15}, and the retrospective sampler of \citet{Pap08}. It is worth noticing that, although originally introduced for the case of DP mixture models, the ideas behind slice and retrospective sampling algorithms are naturally extended to the more general class of mixture models for which the mixing random probability measure admits a stick-breaking representation \citep{Ish01}, thus including the PY mixture model as a special case. {In this context \citet{Fav13b} propose a general framework for slice sampling the class of mixtures of $\sigma$-stable Poisson--Kingman model.} Henceforth we will use the term slice sampling to refer to the proposals of \citet{Wal07} and \citet{Kal11}, and not to the general definition of slice sampling. 

{Recent contributions have proposed hybrid strategies for posterior sampling nonparametric mixture models, that combine steps of marginal and conditional algorithms and therefore cannot be classified as either type of algorithm. Notable examples are the hybrid sampler of \citet{Lom15} for the general class of Poisson--Kingman mixture models, and the hybrid approach proposed by \citet{Dub20} for a wide range of Bayesian nonparametric models based on completely random measures.}

Marginal methods are appealing for their simplicity and for the fact that the number of random elements that must be drawn at each iteration of the sampler, i.e. the components of $\btheta$, is deterministic and thus bounded. {At the same time, quantifying the posterior uncertainty, e.g. via posterior credible sets, by using the output of marginal methods is in general not straightforward since marginal methods do not generate realizations of the posterior distribution of $\tilde f$, but only of its conditional expectation $\E[\tilde f\mid \btheta,\bX]$, where the expectation is taken with respect to $\tilde p$. To this end, convenient strategies have been proposed, which typically exploit the possibility of sampling approximate realizations of $\tilde p$ conditionally on the values of $\btheta$ generated by the marginal algorithm (see discussions in \citeauthor{Gel02}, \citeyear{Gel02}; \citeauthor{Tad12}, \citeyear{Tad12}; \citeauthor{Arb16}, \citeyear{Arb16}).}
Conditional methods, instead, produce approximate trajectories from the posterior distribution of $\tilde f$, which can be readily used to quantify posterior uncertainty. Moreover, by exploiting the conditional independence of the parameters $\theta_i$'s, given $\tilde p$ or a finite summary of it, conditional methods conveniently avoid sequentially updating the components of $\btheta$ at each iteration of the MCMC, thus 
leading to a fully parallelizable updating step within each iteration. On the other hand, the random truncation at the core of conditional methods such as slice and retrospective samplers makes the number of atoms and jumps that must be drawn at each iteration of the algorithm, random and unbounded. 
By confining our attention to the slice sampler of \citet{Wal07} and, equivalently, its dependent slice-efficient version \citep{Kal11}, we observe that, while its sampling routines are efficient and reliable when the DP case is considered, the same does not hold for the more general class of PY mixtures, specially when large values of $\sigma$ are considered. In practice, we noticed that, even for small sample sizes, the number of random elements that must be drawn at each iteration of the algorithm can be extremely large, often so large to make an actual implementation of the slice sampler for PY mixture models unfeasible. 
It is clear-cut that this limitation represents a major problem as the discount parameter $\sigma$ greatly impacts the robustness of the prior with respect to model-based clustering \citep[see][]{Lij07, Can17biometrics}. In order to shed some light on this aberrant behaviour, we investigate the distribution of the random number $N_n$ of jumps that must be drawn at each iteration of a slice sampler, implemented to carry out posterior inference based on a sample of size $n$. We can define---see  Appendix~\ref{sec:jumps} for details---a data-free lower bound for $N_n$, that is a random variable $M_n$  such that $N_n(\omega)\geq M_n(\omega)$ for every $\omega\in\Omega$ and for every sample of size $n$. $M_n$ is distributed as  $\min\left\{l\geq 1 \,:\, \prod_{j\leq l} (1-V_j)<B_n\right\}$, where the $V_j$'s are defined as in \eqref{eq:stick} and $B_n\sim \text{Beta}(1,n)$: studying the distribution of the lower bound $M_n$ will provide useful insight on $N_n$. Note that, in addition, $M_n$ coincides with the number of jumps to be drawn in order to generate a sample of size $n$ by adapting to the PY case the retrospective sampling idea introduced for the DP by \citet{Pap08}. 
\begin{figure*}
	\begin{center}
		\includegraphics[width=0.8\textwidth]{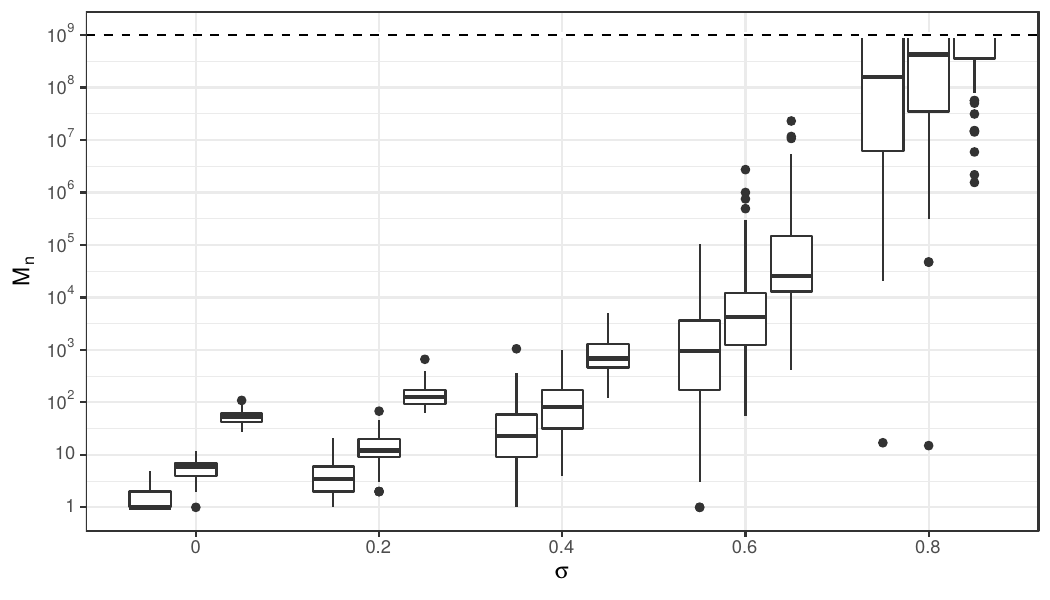}
		\caption{Boxplots for the empirical distributions of $M_n$, with $n=100$, for $\sigma\in\{0,0.2,0.4,0.6,0.8\}$ and different values of $\vartheta$, namely $\vartheta=0.1$ (left), $\vartheta=1$ (middle) and $\vartheta=10$ (right). Results, based on 100 realizations of $M_n$, are truncated at $10^9$ (dashed line).}
		\label{fig1}
	\end{center}
\end{figure*}
Figure~\ref{fig1} shows the empirical distribution of $M_n$, with $n=100$, for various combinations of $\vartheta$ and $\sigma$. 
The estimated median of the distribution of $M_n$ grows with $\sigma$ and, for any given value of $\sigma$, with $\vartheta$. 
It can be appreciated that the size of the values taken by $M_n$, and thus by $N_n$, explodes when $\sigma$ grows beyond $0.5$, fact which leads to the aforementioned computational bottlenecks in routine implementations of the slice sampler. For example, when $\sigma=0.8$, 
the estimated probability of $M_n$ exceeding $10^9$ is equal to $0.35$, $0.42$ and $0.63$, for $\vartheta$ equal to $0.1$, $1$ and $10$, respectively. From an analytic point of view, following \citet{Mul98}, it is easy to show that in the DP case (i.e. $\sigma=0$),  $(M_n -1) \sim \mbox{Poisson}(\vartheta \log(1/B_n))$.
Beyond the DP case (i.e. $\sigma\in(0,1)$), an application of \citet{Arb18} allows us to derive an analogous asymptotic result, which corroborates our empirical findings on the practical impossibility of using the slice sampler for PY mixtures with $\sigma\geq 0.5$. See Proposition~\ref{prop:arbel} and related discussion in the Appendix.

Herein, we propose a new sampling strategy, named importance conditional sampling (ICS), for PY mixture models, which 
combines the appealing features of both conditional and marginal methods, while avoiding their weaknesses, including the computational bottleneck depicted in Figure~\ref{fig1}. Like marginal methods, the ICS 
has a simple and interpretable sampling scheme, reminiscent of Blackwell-MacQueen's P\'olya urn \citep{Bla73}, and 
allows to work with the update of a bounded number of random elements per  iteration; at the same time, being a conditional method,  
it allows for fully parallelizable parameters update and 
it accounts for straightforward approximate posterior quantification. Our proposal exploits the posterior representation of the PY process, derived by \citet{Pit96} in combination with an efficient sampling-importance resampling idea. 
The structure of \citet{Pit96}'s representation makes it suitable for numerical implementations of PY based models, as indicated in \citet{Ish01}, and nicely implemented by  \citet{fallbarat}.  

The rest of the paper is organized as follows. The ICS is described in Section~\ref{sec:approach}. Section~\ref{sec:simul} is dedicated to an extensive simulation study, comparing the performance of the ICS with  state-of-the-art marginal and conditional sampling methods. Section  \ref{sec:app} proposes (reports) an illustrative application, where the proposed algorithm is used to analyse a data set from the Collaborative Perinatal Project \citep{cpp}. In this context, Section \ref{sec:12_hosp} is dedicated to illustrate how the ICS approach can be extended to the case of nonparametric mixture models for partially exchangeable data. Section \ref{sec:conc} concludes the paper with a discussion.  Additional results are presented in the Appendix. 
 
\section{Importance conditional sampling}\label{sec:approach}

The random elements involved in a PY mixture model defined as in \eqref{eq:hier_model} are observations $\bX$, latent parameters $\btheta$ and the PY random probability measure $\tilde p$. The joint distribution of $(\bX, \btheta, \tilde p)$ can be written as 
\begin{equation}\label{eq:joint}
p(\bX, \btheta,\tilde p)
=\prod_{i=1}^n \kernel(X_i;\theta_i) \prod_{j=1}^{k_n} \tilde p(\d \theta_j^*)^{n_j} Q(\tilde p),
\end{equation}
where $\btheta^*=(\theta_1^*, \dots, \theta_{k_n}^*)$ is the vector of unique values in $\btheta$, with frequencies $(n_1, \dots, n_{k_n})$ such that $\sum_{j=1}^{k_n} n_j = n$, and $Q$ is the distribution of $\tilde p \sim PY(\sigma,\vartheta;P_0)$. In line of principle, the full conditional distributions of all random elements can be derived from \eqref{eq:joint} and used to devise a Gibbs sampler. Given that the vector $\bX$, conditionally on $\btheta$, is independent of $\tilde p$, the update of $\btheta$ is the only step of the Gibbs sampler which works conditionally on a realization of the infinite-dimensional $\tilde p$. The conditional distribution $p(\btheta\mid \bX, \tilde p)$ therefore will be the main focus of our attention: its study will allow us to identify a finite-dimensional summary of $\tilde p$, sufficient for the purpose of updating $\btheta$ from its full conditional distribution. As a result, as far as $\tilde p$ is concerned, only the update of its finite-dimensional summary will need to be included in the Gibbs sampler. 
Our proposal exploits a convenient representation of the posterior distribution of a PY process \citep{Pit96},  reported in the next proposition.

\begin{proposition}\label{prop:cor20}\citep[Corollary 20 in][]{Pit96}. Let $t_1,\ldots,t_n\mid\tilde p\sim \tilde p$ and $\tilde p\sim PY(\sigma,\vartheta;P_0)$, and denote by $(t_1^*,\ldots,t_{k_n}^*)$ and $(n_1,\ldots,n_{k_n})$ the set of $k_n$ distinct values and corresponding frequencies in $(t_1,\ldots,t_n)$. The conditional distribution of $\tilde p$, given $(t_1,\ldots,t_n)$, coincides with the distribution of 
	\begin{equation*}
	p_0\tilde q (\cdot)+ \sum_{j=1}^{k_n} p_j \delta_{t_j^*}(\cdot),
	\end{equation*} 
	where 
	$(p_0, p_1,\ldots, p_{k_n})\sim \text{Dirichlet}(\vartheta + k_n\sigma, n_1-\sigma,\ldots,\allowbreak n_{k_n}-\sigma)$ 
	and $\tilde q\sim PY(\sigma,\vartheta+k_{n}\sigma;P_0)$ is independent of $(p_0, p_1, \ldots, p_{k_n})$.
\end{proposition}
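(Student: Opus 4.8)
The plan is to establish the statement by induction on the sample size $n$, reading it as a conjugacy-type identity: at each step we enlarge the conditioning set by one further draw $t_{n+1}\mid\tilde p\sim\tilde p$ and check that the posterior keeps the advertised form. The case $n=0$ is trivially consistent with the formula, since a $\text{Dirichlet}(\vartheta)$ ``vector'' is degenerate at $1$ and the displayed random measure reduces to $\tilde q\sim PY(\sigma,\vartheta;P_0)$, so the induction may start at $n=1$. The only genuinely substantive base case is precisely $n=1$: if $\tilde p\sim PY(\sigma,\vartheta;P_0)$ and $t_1\mid\tilde p\sim\tilde p$, then $\tilde p\mid t_1\stackrel{d}{=} p_0\tilde q+p_1\delta_{t_1}$ with $(p_0,p_1)\sim\text{Dirichlet}(\vartheta+\sigma,1-\sigma)$ and $\tilde q\sim PY(\sigma,\vartheta+\sigma;P_0)$ independent. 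I would deduce this from the stick-breaking representation \eqref{eq:infinitesum}--\eqref{eq:stick}: drawing $t_1$ from $\tilde p=\sum_j p_j\delta_{\tilde\theta_j}$ amounts to choosing an index $J$ by a size-biased pick, $\P(J=j\mid\tilde p)=p_j$, and the key input is the invariance in law of the weight sequence \eqref{eq:stick} under size-biased permutation---equivalently, that deleting a size-biased pick leaves a weight $p_J\sim\text{Beta}(1-\sigma,\vartheta+\sigma)$ and, after renormalising the surviving weights, an independent copy of the weights of a $PY(\sigma,\vartheta+\sigma;P_0)$. Since the atoms $\tilde\theta_j$ are i.i.d.\ $P_0$ with $P_0$ diffuse, $t_1=\tilde\theta_J$ is almost surely distinct from the surviving atoms, which then serve as the atoms of the independent $\tilde q$; putting $(p_0,p_1)=(1-p_J,p_J)$ yields the claim.

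For the inductive step I would assume the statement for $n$ and condition the representation $\tilde p=p_0\tilde q+\sum_{j=1}^{k_n}p_j\delta_{t_j^*}$ (valid in law by the inductive hypothesis) on one extra draw $t_{n+1}\mid\tilde p\sim\tilde p$. Given $\tilde p$, the event $\{t_{n+1}=t_j^*\}$ has probability $p_j$, whereas ``$t_{n+1}$ is a new value''---which, because $\tilde q$ and $P_0$ are diffuse, coincides almost surely with ``$t_{n+1}$ falls in the $\tilde q$ part'', as $\tilde q(\{t_j^*\})=0$ a.s.---has probability $p_0$; the two cases are almost surely exhaustive and settle whether $k_{n+1}=k_n$ or $k_{n+1}=k_n+1$. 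On $\{t_{n+1}=t_j^*\}$ one increments $n_j$, leaves $\tilde q$ unchanged (nothing was learned about it), and size-biases $(p_0,\dots,p_{k_n})$ in coordinate $j$; on the complementary event one size-biases $(p_0,\dots,p_{k_n})$ in coordinate $0$ and updates $\tilde q$ itself by the $n=1$ rule, which produces the new atom $\delta_{t_{n+1}}$ and raises the strength parameter of the leftover PY part to $\vartheta+(k_n+1)\sigma$. Two elementary facts about the Dirichlet distribution close the step: (i) conditioning a $\text{Dirichlet}$ vector on a coordinate selected with probability proportional to its value increases that coordinate's parameter by one; and (ii) if $(p_0,\dots,p_{k_n})\sim\text{Dirichlet}(a_0,a_1,\dots,a_{k_n})$ is independent of $(w_0,w_1)\sim\text{Dirichlet}(b_0,b_1)$ with $b_0+b_1=a_0$, then $(p_0w_0,p_1,\dots,p_{k_n},p_0w_1)\sim\text{Dirichlet}(b_0,a_1,\dots,a_{k_n},b_1)$. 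Using (i) in the first case, and (i) and (ii) together with the identity $(\vartheta+(k_n+1)\sigma)+(1-\sigma)=(\vartheta+k_n\sigma)+1$ in the second, one recovers precisely the advertised form with $n$ replaced by $n+1$, the fresh value $t_{n+1}$ playing the role of $t_{k_n+1}^*$ with frequency $1$; this closes the induction.

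The mechanical content above---size-biasing and splitting Dirichlet vectors and verifying that the parameters add up, together with the observations about ties that diffuseness of $P_0$ makes immediate---is routine. The real obstacle, and the only point where the specific $\text{Beta}(1-\sigma,\vartheta+j\sigma)$ structure of the stick-breaking weights enters, is the $n=1$ case: the invariance of the weights \eqref{eq:stick} under size-biased permutation, and the independence of the size-biased pick from the renormalised remainder. This is exactly what one would either cite from \citet{Pit96} (or \citet{Pit97}) or verify directly by a short calculation with the residual-allocation weights $V_j$ of \eqref{eq:stick}. Finally, since the posterior in question is a regular conditional distribution of $\tilde p$ given $(t_1,\dots,t_n)$, hence almost surely unique, exhibiting one version of the stated form is enough; an alternative, less self-contained route would derive the statement from the representation of $\tilde p$ through a stable subordinator together with Palm calculus, which is the route by which Corollary~20 is originally obtained.
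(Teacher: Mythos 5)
The paper does not actually prove this proposition: it is imported verbatim as Corollary~20 of \citet{Pit96}, so there is no internal argument to compare yours against. Your inductive derivation is a legitimate, essentially self-contained alternative, and the bookkeeping checks out: the Dirichlet size-biasing fact (conditioning on a coordinate chosen with probability proportional to its value adds one to that coordinate's parameter), the fragmentation fact (ii), and the parameter identity $(\vartheta+(k_n+1)\sigma)+(1-\sigma)=(\vartheta+k_n\sigma)+1$ are all correct, as is the observation that diffuseness of $P_0$ makes the events ``$t_{n+1}$ equals some $t_j^*$'' and ``$t_{n+1}$ falls in the $\tilde q$ part'' almost surely exhaustive and identifiable from the value of $t_{n+1}$ alone. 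You are also right to isolate the genuine content in the $n=1$ case. Be aware, though, that this base case is not a smaller or more elementary statement than the proposition itself: the deletion property of the $\mathrm{GEM}(\sigma,\vartheta)$ weights under a size-biased pick, together with the independence of the picked weight from the renormalised remainder, \emph{is} the $n=1$ instance of Corollary~20, and it does not follow from the mere form of \eqref{eq:stick} without invoking (or reproving) the invariance under size-biased permutation of the two-parameter GEM law. So as written your argument reduces the proposition to a cited key lemma rather than eliminating the citation; to make it fully self-contained you would need to supply the short residual-allocation computation establishing that the size-biased permutation of $\mathrm{GEM}(\sigma,\vartheta)$ is again $\mathrm{GEM}(\sigma,\vartheta)$ (e.g.\ as in \citet{Pit96} or \citet{Pit97}). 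Pitman's own route is via the exchangeable partition probability function and the prediction rule \eqref{eq:urn} rather than this weight-level induction; your version buys a more probabilistically transparent, step-by-step conjugacy picture at the cost of leaning on that one invariance result.
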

In the context of mixture models, Pitman's result implies that the full conditional distribution of $\tilde p$ coincides with the distribution of a mixture composed by a PY process $\tilde q$ with updated parameters, and a discrete random probability measure with $k_n$ fixed jump points at $\bt=(t_1^*,\ldots,t_{k_n}^*)$. This means that, in the context of a Gibbs sampler, while, by conditional independence, the update of each parameter $\theta_i$ is done independently of the other parameters $(\theta_1,\ldots,\theta_{i-1},\theta_{i+1},\ldots, \theta_n)$, the distinct values $\btheta^*$ taken by the parameters at a given iteration, are carried on to the next iteration of the algorithm through $\tilde p$, in the form of fixed jump points $\bt$. 
Specifically, if $\Theta^* = \Theta \setminus \{t^*_1, \dots, t^*_{k_n}\}$, then,   
for every $i = 1, \dots, n$, the full conditional distribution of the $i$-th parameter $\theta_i$ 
can be written as
\begin{equation}\label{eq:fc_thetai}
\P(\theta_i \in \d t \mid X_i, \tilde p) \propto p_0 \kernel(X_i;t)  \tilde q (\d t) + \sum_{j=1}^{k_n} p_j \kernel(X_i;t^*_{j}) \delta_{t^*_{j}}(\d t),
\end{equation}
where $\tilde q$ is the restriction of $\tilde p$ to $\Theta^*$, $p_0=\tilde p(\Theta^*)$ and $p_j=\tilde p(t_j^*)$, for every $j=1,\ldots,k_n$.  
The full conditional in \eqref{eq:fc_thetai} is reminiscent of the Blackwell-MacQueen urn scheme characterizing the update of the parameters in marginal methods: the parameter $\theta_i$ can either coincide with one of the $k_n$ fixed jump points of $\tilde p$ or take a new value from a distribution proportional to $\kernel(X_i;t)\tilde q(\d t)$. The key observation at the basis of the ICS is that, for the purpose of updating the parameters $\btheta$, there is no need to know the whole realization of $\tilde p$ but it suffices to know the vector $\bt$ of fixed jump points of $\tilde p$, the value $\bp=(p_0,p_1,\ldots,p_{k_n})$ taken by $\tilde p$ at the partition $(\Theta^*,t^*_1,\ldots,t^*_{k_n})$ of $\Theta$, and to be able to sample from a distribution proportional to $\kernel(X_i,t)\tilde q(\d t)$. For the latter task, we adopt a sampling-importance resampling approach \citep[see, e.g.,][]{Smi92} with proposal distribution $\tilde q$. It is remarkable that such solution allows us to {approximately} sample from the target distribution 
while avoiding the daunting task of simulating a realization of $\tilde q$ itself. Indeed, for any $m\geq 1$, a vector $\bs=(s_1,\ldots,s_m)$ such that $s_i\mid \tilde q\simiid \tilde q$ can be generated by means of an urn scheme 
exploiting \eqref{eq:urn}. Given the almost sure discreteness of $\tilde q$, the generated vector will show ties with positive probability and thus will feature $r_m\leq m$ distinct values $(s_1^*,\ldots,s_{r_m}^*)$, with frequencies $(m_{1},\ldots,m_{r_m})$ such that $\sum_{j=1}^{r_m}m_j=m$. In turn, importance weights for the resampling step are computed, for any $\ell=1,\ldots,m$, as
\begin{equation*}
 w_\ell=\frac{\kernel(X_i,s_\ell)\tilde q(s_\ell)}{\tilde q(s_\ell)}=\kernel(X_i,s_\ell),
 \end{equation*}
thus without requiring the evaluation of $\tilde q$. 
As a result, the full conditional \eqref{eq:fc_thetai} can be {rewritten} as 
\begin{equation}\label{eq:fc_thetai_bis}
\P(\theta_i \in \d t \mid X_i,\tilde p ) \propto p_0 \sum_{j=1}^{r_m}\frac{m_j}{m}\kernel(X_i;s_j^*)\delta_{s_j^*}(\d t)+ \sum_{j=1}^{k_n} p_j \kernel(X_i;t^*_{j}) \delta_{t^*_{j}}(\d t).
\end{equation}
Once more we highlight an interesting analogy between the conditional approach we propose and marginal methods: the introduction of the auxiliary random variables $s_1^*,\ldots,s_{r_m}^*$ reminds of the augmentation introduced in Algorithm 8 of \citet{Nea00}, marginal algorithm proposed to deal with a non-conjugate specification of the mixture model. From \eqref{eq:fc_thetai_bis} it is straightforward to identify $(\bs,\bt,\bp)$ as a finite-dimensional summary of $\tilde p$, sufficient for the purpose of updating the parameters $\theta_i$ from their full conditionals. This means that, as far as $\tilde p$ is concerned, only its summary $(\bs,\bt,\bp)$ must be included in the updating steps of the Gibbs sampler. To this end, Proposition~\ref{prop:cor20} provides the basis for the update of $(\bs,\bt,\bp)$. Indeed, conditionally on $\btheta$, the fixed jump points $\bt$ coincide with the $k_n$ distinct values appearing in $\btheta$, while the random vectors $\bp$ and $\bs$ are independent with $\bp \sim \text{Dirichlet}(\vartheta+\sigma k_n,n_1-\sigma,\ldots,n_{k_n}-\sigma)$ and the joint distribution of $\bs$ characterized by the predictive distribution of a PY$(\sigma,\vartheta+\sigma k_n;P_0)$, that is, for any $\ell =0,1,\ldots,m-1$,
\begin{equation}\label{eq:pred_s}
\P(s_{\ell+1}\in \d s \mid s_1,\ldots,s_\ell)=\frac{\vartheta+\sigma(k_n+r_\ell)}{\vartheta+\sigma k_n+\ell}P_0(\d s)+\sum_{j=1}^{r_\ell}\frac{m_{j}-\sigma}{\vartheta+\sigma k_n+\ell} \delta_{s^*_j}(\d s),
\end{equation}
where $(s_1^*,\ldots,s_{r_\ell}^*)$ is the vector of $r_\ell$ distinct values appearing in $(s_1,\ldots,s_\ell)$, with corresponding frequencies $(m_{1},\ldots,m_{r_\ell})$ such that $\sum_{j=1}^{r_\ell} m_j=\ell$. 

\SetNlSty{textbf}{[}{]}
\begin{algorithm*}[h!]
	\DontPrintSemicolon
	\textsl{Set admissible initial values $\btheta^{(0)}$}\\
	\For{each iteration $r = 1,\dots,R$}    
	{ 
		\textbf{set} $\bt^{(r)}=\btheta^{*(r-1)};$\\
		\textbf{sample} $\bp^{(r)}$ \textsl{from} 
		$\bp^{(r)} \sim\text{Dirichlet}(\vartheta +  \sigma k_n^{(r-1)},n_{1}^{(r-1)} - \sigma,\ldots,n_{k_n}^{(r-1)} - \sigma);$\\
		\For{each $\ell=0,\ldots,m-1$}   
		{
			\textbf{let} \textsl{$r_\ell^{(r)}$ be the number of distinct values in $(s_1^{(r)},\ldots,s_\ell^{(r)})$}, \textbf{sample} $s_{\ell+1}^{(r)}$ \textsl{from}
			\begin{equation*}
			\P(s_{\ell+1}^{(r)} \in \cdot \mid s^{(r)}_1,\ldots,s^{(r)}_{\ell})=
			\frac{\vartheta+\sigma(k_n^{(r-1)}+r_\ell^{(r)})}{\vartheta+\sigma k_n^{(r-1)}+\ell}P_0(\cdot)+\sum_{j=1}^{r_\ell^{(r)}}\frac{m_j^{(r)}-\sigma}{\vartheta+\sigma k_n^{(r-1)}+\ell} \delta_{s_j^{*(r)}}(\cdot);
			\end{equation*}
		}
		\textbf{let} \textsl{$r_m^{(r)}$ be the number of distinct values in $\bs^{(r)}$};\\ 
\For{each $i=1,\ldots,n$}   
		{
			\textbf{sample} $\theta_i^{(r)}$ \textsl{from}
			\begin{equation*}
			\P(\theta_i^{(r)}= t \mid \cdots )\propto\begin{cases}
			p_0^{(r)} \frac{m_\ell^{(r)}}{m} \kernel(X_i;s_\ell^{*(r)})&\text{ \textsl{if} }t \in \{s_1^{*(r)},\ldots,s_{r_m^{(r)}}^{*(r)}\}\\[6pt]
			p_j^{(r)} \kernel(X_i;t^{*(r)}_{j})&\text{ \textsl{if} }t \in \{t_1^{*(r)},\ldots,t_{k_n^{(r-1)}}^{*(r)}\}\\[6pt]
			0 &\text{ \textsl{otherwise}}
			\end{cases}
			\end{equation*}
		}
		\textbf{let} \textsl{$\btheta^{*(r)} = (\theta_1^{*(r)}, \dots, \theta_{k_n^{(r)}}^{*(r)})$ be the vector of distinct parameters in $\btheta^{(r)}$;}\\
		\For{each $j=1, \dots, k_n^{(r)}$}
		{
			\textbf{let} \textsl{$\mathcal{C}_j^{(r)}$ be the set of indexes $i$ such that} $\theta_i^{(r)}=\theta_j^{*(r)};$\\
			\textbf{update} $\theta_j^{*(r)}$ \textsl{from}
			$\P(\theta_j^{*(r)}\in \d t \mid \cdots ) \propto P_0(\d t) \prod_{i \in \mathcal{C}_j^{(r)}} \kernel(X_i; t);$\\
			
		}
	}
	\textbf{end}
	\caption{ICS for PY mixture model }
	\label{algo:ICSPY}
\end{algorithm*}

By combining the steps just described, as summarized in Algorithm~\ref{algo:ICSPY}, we can then devise a Gibbs sampler which we name ICS. In Algorithm~\ref{algo:ICSPY} and henceforth, the superscript $(r)$ is used to denote the value taken by a random variable at the $r$-th iteration. 
In order to improve mixing, the ICS includes an acceleration step which consists in updating, at the end of each iteration, the distinct values $\btheta^*$ from their full conditional distributions. Namely, for every $j=1,\ldots,k_n$,
\begin{equation}\label{eq:PY_acc}
\P(\theta_j^{*}\in \d t \mid \bX) \propto P_0(\d t) \prod_{i\in C_j} \kernel(X_i;t),
\end{equation}
where $C_j=\{i \in \{1,\ldots,n\} \,:\, \theta_i = \theta_j^*\}$.

Finally, a realization from the posterior distribution of $(\bs,\bt,\bp)$ defines an approximate realization $f$ of the posterior distribution of the random density defined in \eqref{eq:DPmm}, that is
\begin{equation}\label{eq:appr_dens}
\tilde f_m(x)= p_0  \sum_{l=1}^{r_m} \frac{m_l }{m} \kernel(x;s_l^*)+\sum_{j=1}^{k_n} p_j \kernel(x;t_j^{*}).
\end{equation}
	
If the algorithm is run for a total of $R$ iterations, the first $R_b$ of which discarded as burn-in, then the posterior mean is estimated by
\begin{equation*}\hat f(x)= \frac{1}{R-R_b} \sum_{r=R_b+1}^{R} \tilde f_m^{(r)}(x),
\end{equation*}
where $\tilde f_m^{(r)}$ denotes the approximate density sampled from the posterior at the $r$-th iteration. 
The set of densities $\tilde f_m^{(r)}$ can be also used to quantify posterior uncertainty. It is worth remarking though that any such quantification is based on realizations of a finite dimensional summary of the infinite-dimensional $\tilde p$ and thus is, by its nature, approximated. For a quantification of the approximating error one could resort to \citet{Arb18}.

It is instructive to consider how the ICS works for the special case of DP mixture models, that is when $\sigma=0$. In such case, the steps described in Algorithm~\ref{algo:ICSPY} can be nicely interpreted by resorting to three fundamental properties characterizing the DP, namely conjugacy, self-similarity, and availability of finite-dimensional distributions. More specifically, when $\sigma=0$, step 4 of Algorithm~\ref{algo:ICSPY} consists in generating the random weights $\bp$ from a Dirichlet distribution of parameters $(\vartheta,n_1,\ldots,\allowbreak n_{k_n})$. 
This follows by combining the conjugacy of the DP \citep{Fer73}, for which 
\begin{equation*}\tilde p \mid \btheta \sim DP\left(\vartheta + n; \frac{\vartheta}{\vartheta + n}P_0+\sum_{j=1}^{k_n}\frac{n_j}{\vartheta + n} \delta_{\theta_j^*}\right),
\end{equation*}
 with the availability of finite-dimensional distributions of DP \citep{Fer73}, which provides the distribution of $\bp$, defined as the evaluation of the conditional distribution of $\tilde p$ on the partition of $\Theta$ induced by $\btheta$. Moreover, when $\sigma=0$, according to the predictive distribution displayed in step 6 of Algorithm~\ref{algo:ICSPY}, the auxiliary random variables $\bs$ are exchangeable from $\tilde q\sim DP(\vartheta;P_0)$, with $\tilde q$ independent of $\bp$. This is nicely implied by the self-similarity of the DP \citep[see, e.g.,][]{Gho10}, according to which $\tilde q=\tilde p |_{\Theta^*}$ is independent of $\tilde p|_{\Theta\setminus \Theta^*}$, and therefore of $\bp$, and is distributed as a $DP(\vartheta P_0(\Theta^*);P_0|_{\Theta^*})$, and by the diffuseness of $P_0$.
As a result, in the DP case, the auxiliary random variables $\bs$ are generated from the prior model.

\section{Simulation study}\label{sec:simul}

We performed a simulation study to analyze the performance of the ICS algorithm and to compare it with marginal and slice samplers. For the latter, two versions proposed by \citet{Kal11} were considered, namely the dependent and the independent slice-efficient algorithms. The independent version of the algorithm requires the specification of a deterministic sequence $\xi_1,\xi_2,\ldots$, which in our implementation was set equal to $\E[p_1],\E[p_2],\ldots$, with the $p_j$'s defined in \eqref{eq:stick}, in analogy with what was proposed by \citet{Kal11} for the DP (see Algorithm~\ref{algo:SLI2sampler} in the Appendix for more details). All algorithms were written in \texttt{C++} and are implemented in the \texttt{BNPmix} package \citep{bnpmix}, available on CRAN. Aware that different implementations can lead to a biased comparison \citep[see][for an insightful discussion]{Kri17}, we aimed at reducing such bias to a minimum by letting the four algorithms considered here share the same code for most sub-routines.

Throughout this section we consider synthetic data generated from a simple two-component mixture of Gaussians, namely $f_0(x)=0.75 \phi(x; -2.5, 1) + 0.25\phi(x; 2.5, 1)$, with $\phi(\cdot; \mu, \sigma^2)$ denoting the density of a Gaussian random variable with mean $\mu$ and variance $\sigma^2$. All data were analyzed by means of the nonparametric mixture model defined in \eqref{eq:DPmm} and specified by considering a univariate Gaussian kernel $\kernel(x, \theta) = \phi(x; \mu, \sigma^2)$, with $\theta=( \mu, \sigma^2)$, and by assuming a normal-inverse gamma base measure $P_0$ such that $\sigma^2 \sim IG(2,1)$ and $\mu\mid\sigma^2\sim N(0,5\sigma^2)$. Different combinations of values for the parameters $\sigma$ and $\vartheta$, and for the sample size $n$ were considered. The results of this section are then obtained as averages over a specified number of replicates. All algorithms were run for $1\,500$ iterations, of which the first $500$ discarded as burn-in. Convergence of the chains was checked by visual inspection of the trace plots of randomly selected runs, which did not provide any evidence against it. The analysis was carried out by running \texttt{BNPmix} on R 4.0.3 on a 64-bit Windows machine with a 3.4-GHz Intel quad-core i7-3770 processor and 16 GB of RAM.

The first part of our investigation is dedicated to the role of $m$, the size of the auxiliary sample generated for the sampling-importance resampling step within the ICS.  
To this end, we considered two sample sizes, namely $n=100$ and $n=1\,000$, and generated $10$ data sets per size. Such data were then analyzed by considering a combination of values for the PY parameters,  namely $\sigma \in \{0, 0.2, 0.4, 0.6, 0.8\}$ and $\vartheta\in\{1, 10\}$, and by running the ICS with $m \in \{1, 10, 100\}$.
Estimated posterior densities, not displayed here, did not show any noticeable effect of $m$. More interesting findings were obtained when the analysis focused on the quality of the generated posterior sample: larger values for $m$ appear to lead to a better mixing of the Markov chain at the price of additional computational cost. These effects were measured by considering the effective sample size (ESS), computed by resorting to the \texttt{CODA} package \citep{CODA}, and the ratio between runtime, in seconds, and ESS (time/ESS), both averaged over 100 replicates. Following the algorithmic performance analyses of \citet{Nea00}, \citet{Pap08} and \citet{Kal11}, the ESS was computed on the number of clusters---$k_n$ as far as the ICS is concerned---
and on the deviance of the estimated density, with the latter defined as
\begin{equation*}
\mbox{dev}(\bX, \btheta^{(r)}) =-2\sum_{i=1}^n\left(\sum_{j=1}^{k_n^{(r)}} \frac{n_j^{(r)}}{n}\kernel(X_i;\theta_j^{*(r)})\right),
\end{equation*}
for the $r$-th MCMC draw. The ratio time/ESS takes into account both quality of the generated sample and computational cost, and can be interpreted as the average time needed to sample one independent draw from the posterior.

\begin{figure*}[h!]
	\centering
	\includegraphics[width=0.85\textwidth]{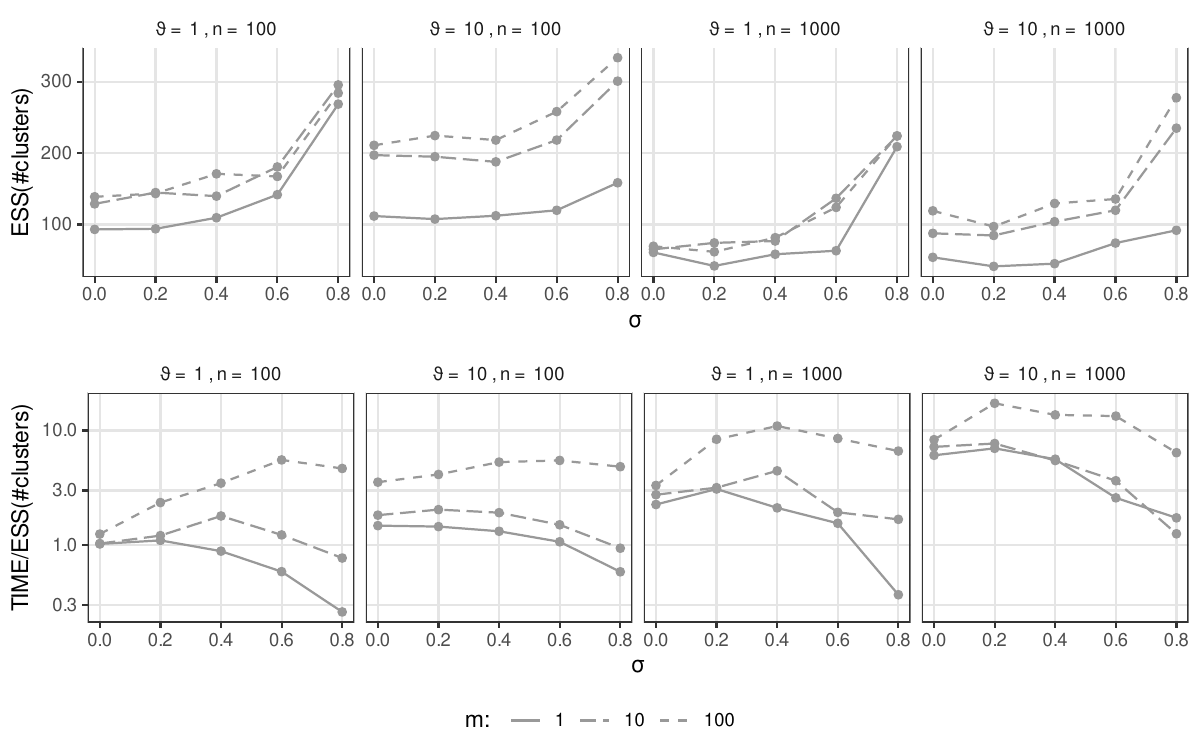}	
	\caption{Simulated data. ICS: ESS computed on the random variable number of clusters (top row) and ratio between runtime (in seconds) and ESS for the same random variable (bottom row). Results are averaged over $100$ replicates.}
	\label{fig:sim1}
\end{figure*}

The results show that larger values of $m$ lead, on average, to a larger ESS, that is to better quality posterior samples. This is displayed in the top row of  Figure~\ref{fig:sim1}, which shows the estimated ESS for $k_n$. We observe that, when averaging over all the considered scenarios, the ESS obtained by setting $m=100$ is $1.82$ and $1.09$ times larger than the average ESS obtained by setting $m=1$ and $m=10$, respectively. At the same time, larger values of $m$ require drawing more random objects per iteration and thus, as expected, lead to longer runtimes. In this sense, the bottom row of Figure~\ref{fig:sim1} clearly indicates that, as far as $k_n$ is concerned, the ratio time/ESS tends to be larger for larger values of $m$. This is particularly evident, 
for example, when $\sigma=0.8$ as the ratio time/ESS corresponding to $m=100$ is, on average, $1.81$ and $1.06$  times larger than the same ratio corresponding to $m=1$ and $m=10$, respectively. Similar conclusions can be drawn by looking at Figure \ref{fig:sim_m_dev}, presented in Appendix \ref{sec:othersimresults} and displaying time/ESS for the deviance of the estimated densities.

When implementing the ICS, the value of $m$ can be tuned based on the desired algorithm performance in terms of quality of mixing and runtime.  As for the rest of the paper, {red}{and for the ease of illustration,} we will work with $m=10$, chosen as a sensible compromise between good mixing and controlled computational cost. 

The second part of the simulation study compares the performance of ICS, marginal sampler, dependent and independent slice-efficient samplers. For the sake of clarity, pseudo-code of the implemented algorithms is provided in Appendix~\ref{ap:competitors}. We considered the sample sizes $n=100$, $n=250$ and $n=1\,000$, and generated 10 data sets per size from $f_0$. These data were then analyzed by considering a combination of values for the PY parameters, namely $\sigma \in \{0, 0.2, 0.4, 0.6, 0.8\}$ and $\vartheta \in \{1, 10, 25\}$.
The results we report are obtained, for each scenario, by averaging over the 10 replicates. 
As for the two slice samplers, due to the aforementioned explosion of the number of drawings per iteration when $\sigma$ takes large values, our analysis was forcefully confined to the case $\sigma\leq 0.4$.
Moreover, the results referring to the case $\sigma=0.4$ are approximate as they were obtained by 
constraining the slice sampler to draw at most $10^5$ components at each iteration: such limitation of our study could not be avoided, given the otherwise unmanageable computational burden associated with this specific setting. Table \ref{tab:bound} in Appendix \ref{sec:othersimresults} shows that such bound was reached more often when large data sets were analyzed. For example, while for $n=100$ the bound was reached on average 12\% and 15\% of the iterations, for independent and dependent slice-efficient samplers respectively, the same happened on average 26\% and 40\% of the iterations when $n=1\,000$. For this reason, these specific results must be considered approximated and, as far as the runtime is concerned, conservative. 
The four algorithms were compared by using the same measures adopted in the first part of the simulation study, namely the ESS for the number of clusters, the ESS for the deviance of the estimated density, and the corresponding ratios time/ESS.

\begin{figure*}[!ht]
	\centering
	\includegraphics[width=0.75\textwidth]{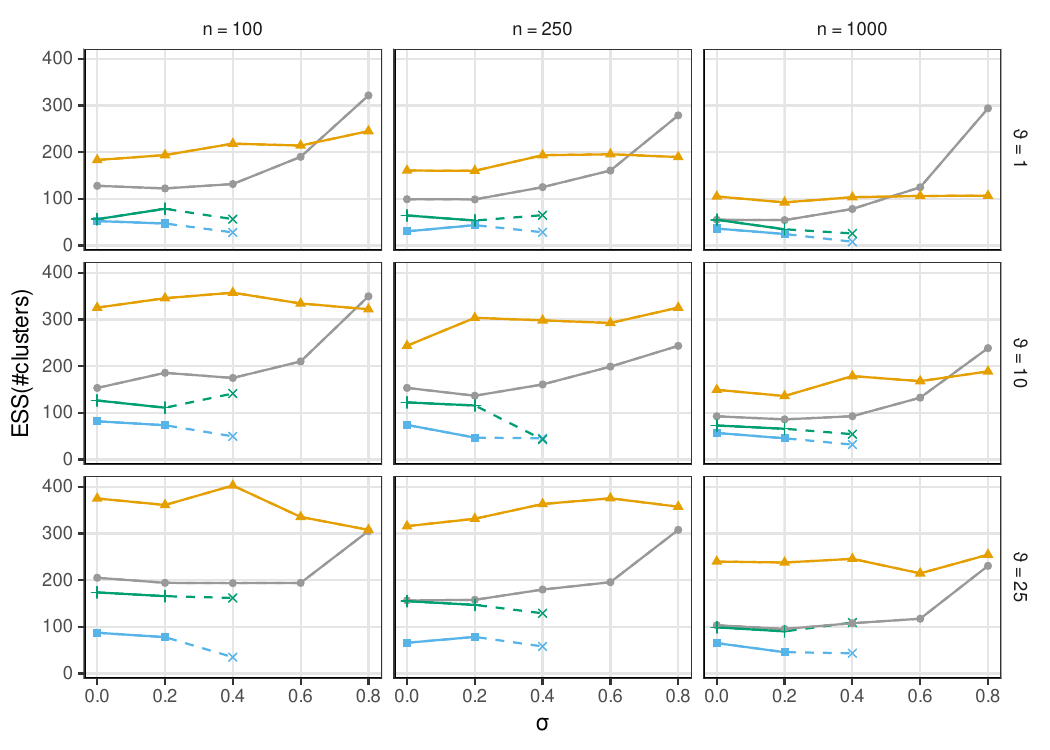}
	\caption{Simulated data. ESS computed on the random variable number of clusters, for ICS (gray), marginal sampler (orange), independent slice-efficient sampler (green) and dependent slice-efficient sampler (blue). Results are averaged over $10$ replicates. The $\times$-shaped marker for the two slice samplers indicates that, when $\sigma=0.4$, the value of the ESS is obtained with an arbitrary upper bound at $10^5$ for the number of jumps drawn per iteration.}
	\label{fig:sim_all_ess}
\end{figure*}
\begin{figure*}[!ht]
	\centering
	\includegraphics[width=0.75\textwidth]{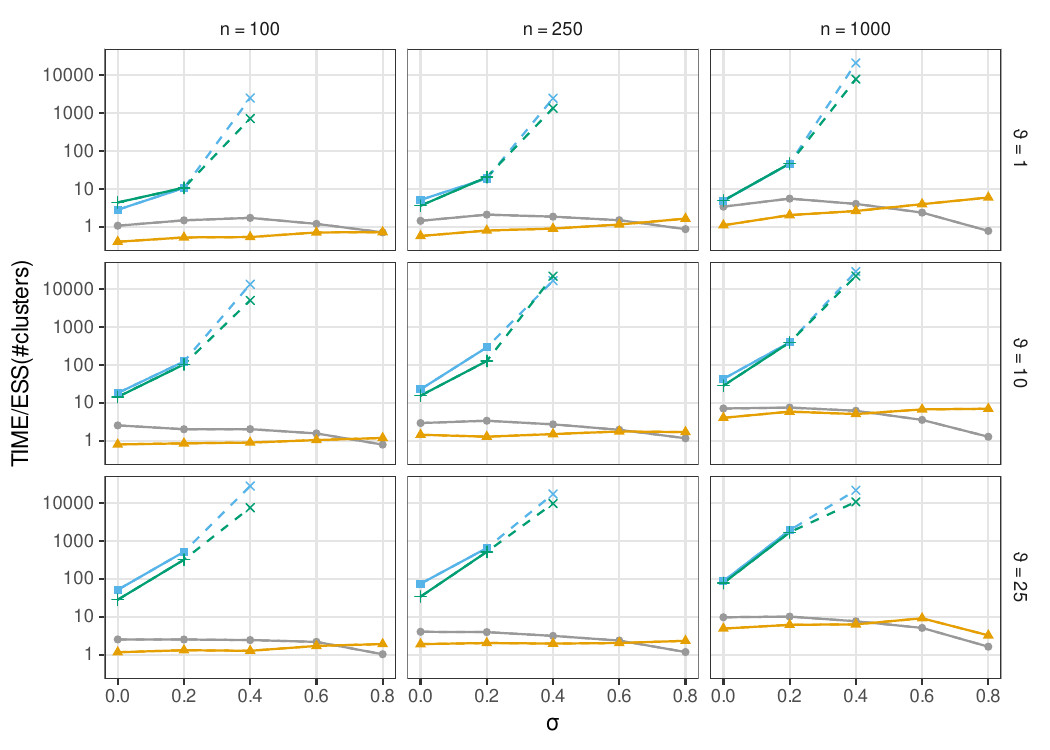} 
	\caption{Simulated data. Ratio of runtime (in seconds) over ESS, in log-scale, computed for the number of clusters, for ICS (gray), marginal sampler (orange), independent slice-efficient sampler (green) and dependent slice-efficient sampler (blue). Results are averaged over $10$ replicates. The $\times$-shaped marker for the two slice samplers indicates that, when $\sigma=0.4$, the value of time/ESS is obtained with an arbitrary upper bound at $10^5$ for the number of jumps drawn per iteration.}
	\label{fig:sim_all_tess}
\end{figure*} 

A clear trend can be appreciated in Figure~\ref{fig:sim_all_ess} where the focus is on the ESS for the number of clusters: the marginal sampler displays, on average, a larger ESS than ICS, whose ESS appears, in turn, uniformly larger than the ones characterizing the two slice samplers. As for the latter two, while the displayed trend is similar, it can be appreciated that the independent algorithm is uniformly characterized by a better mixing.
Results referring to the ratio time/ESS, for the variable number of clusters, are displayed in Figure~\ref{fig:sim_all_tess}. ICS and marginal sampler show in general similar performances. It is interesting to notice though that, while the ratio time/ESS for the marginal algorithm is rather stable over the values of $\sigma$ considered in the study, the same quantity for ICS indicates a slightly better performance when $\sigma$ takes large values.
On the counterpart, the efficiency of the two slice samplers is heavily affected by the value of $\sigma$, with time/ESS exploding when $\sigma$ moves from 0 to 0.4 and when $\vartheta$ is increasing. On the basis of this study, the slice samplers appear competitive options when $\sigma \in\{0,0.2\}$ and a small $\vartheta$ are considered. On the contrary, it is apparent that larger values of $\sigma$ make the two slice samplers less efficient than ICS and marginal sampler. Similar considerations can be drawn when analyzing the performance in terms of deviance of the estimated densities, with the plots for ESS and the ratio time/ESS displayed in Figures \ref{fig:sim1dev} and \ref{fig:sim2dev}. 

\section{Illustrations}\label{sec:app} 

We consider a data set from the Collaborative Perinatal Project (CPP), a large prospective study of the cause of neurological disorders and other pathologies in children in the United States.  Pregnant women were enrolled between 1959 and 1966 when they showed up for prenatal care at one of 12 hospitals.  
While several measurements per pregnancy are available, our attention focuses on two main quantities: the  gestational age (in weeks) and the logarithm of the concentration level of DDE, a persistent metabolite of the pesticide DDT, known to have adverse impact on the gestational age \citep{Lon01}. Our analysis has a two-fold goal. 
First, we focus on estimating and comparing  the joint density of gestational age and DDE for two groups of women, namely smokers and non-smokers.  This will also allow us to assess how the probability of premature birth varies conditionally on the level of DDE. Adopting a nonparametric mixture model will allow us to investigate the presence of clusters within the data. Second, we consider the data set partitioned in the 12 hospitals of the study and focus on the estimation of the hospital-specific distribution of the gestational age, by accounting for possible association across subsamples collected at different hospitals. For this analysis we adopt a nonparametric mixture model for partially exchangeable data and propose an extension of the ICS approach presented in Section \ref{sec:approach}.

\subsection{Cross-hospital analysis}\label{sec:one_hosp}

Smokers and non-smokers groups have sample size of $n_1=1023$ and $n_2=1290$, respectively. For the two groups we independently model the joint distribution of gestational age and DDE by means of a PY mixture model \eqref{eq:hier_model} with bivariate Gaussian kernel function $\kernel(x, \btheta) = \phi(x, \btheta)$, with $\btheta=(\bmu, \bSigma)$, and with conjugate normal-inverse Wishart base measure $P_0 =  N\text{-}IW(\bM_0, k_0, \nu_0, \bS_0)$. In absence of precise prior information on the density to be estimated, we specify a vague base measure following an empirical Bayes approach. Specifically we let $\bM_0$ be equal to the sample average, $\bS_0$ be equal to three times the empirical covariance, $k_0=1/10$, and $\nu_0=5$. These settings are equivalent to assuming that the scale parameter of the generic mixture component coincides with 1.5 times the empirical covariance, while the location parameter is centered on the sample mean with prior variance equal to 10 times the scale parameter. 
Next, we set the parameters $\vartheta$ and $\sigma$ on the basis of the prior distribution they imply on the number of clusters $k_n$, within each group.
Specifically, we set the prior expectation and prior standard deviation for $k_n$ equal to 10 and 20, respectively. 
Our choice implies that a small probability ($\approx 0.05$) is assigned to the event $k_n\geq 50$. 
This argument leads to set $(\sigma,\vartheta)$ equal to $(0.548, -0.485)$ and $(0.5295, -0.4660)$ for the groups of smokers and non-smokers, respectively. The values specified for $\sigma$ are thus larger than $0.5$, situation that is conveniently tackled by the ICS, as displayed by the simulation study of Section \ref{sec:simul}.
An alternative modelling strategy is achieved by introducing a hyperprior distribution for both $\sigma$ and $\theta$. While not explored in this illustration, it is worth stressing that this strategy might be conveniently implemented by adopting the ICS: if the prior on $\sigma$ is defined on $(0,1)$, an implementation of the model requires a sampler whose efficiency is not compromised by the specific values of $\sigma$ explored by the chain.

The analysis of both samples was carried out by running the ICS for $12\,000$ iterations, with the first $7\,000$ discarded as burn-in. Convergence of the chain was assessed as satisfactory by visually investigating the trace plots and by means of the Geweke's diagnostics \citep{Geweke}. Running the analysis of the two samples took less than two minutes in total. It is important to stress that, given the model specification, the same analysis could not be carried out by implementing the slice samplers described in Algorithms~\ref{algo:SLIsampler} and \ref{algo:SLI2sampler}, 
as the value of $\sigma$ would make computation time endless. We could instead implement the marginal sampler described in Algorithm~\ref{algo:MARsampler}, which, as expected, took considerably longer than the ICS (about 11 minutes), due to the moderately large sample sizes.

The contour curves of the estimated joint densities of gestational age and DDE for the two groups are displayed in the left panel of Figure~\ref{fig:sim_biv} and suggest different distributions between smokers and non-smokers, specially when large values for DDE are considered. Differences between the two groups are further highlighted by the right panel of Figure~\ref{fig:sim_biv}, which shows the estimated probability---along with corresponding pointwise 90\% posterior credible bands---of premature birth (i.e. gestational age smaller than 37 weeks), conditionally on the value taken by DDE, for the two groups. Once again, a difference between smokers and non-smokers can be appreciated for large values of DDE, although a sizeable uncertainty is associated with posterior estimates, as displayed by the large credible bands.
\begin{figure*}[!ht]
	\centering
	\includegraphics[width=0.85\textwidth]{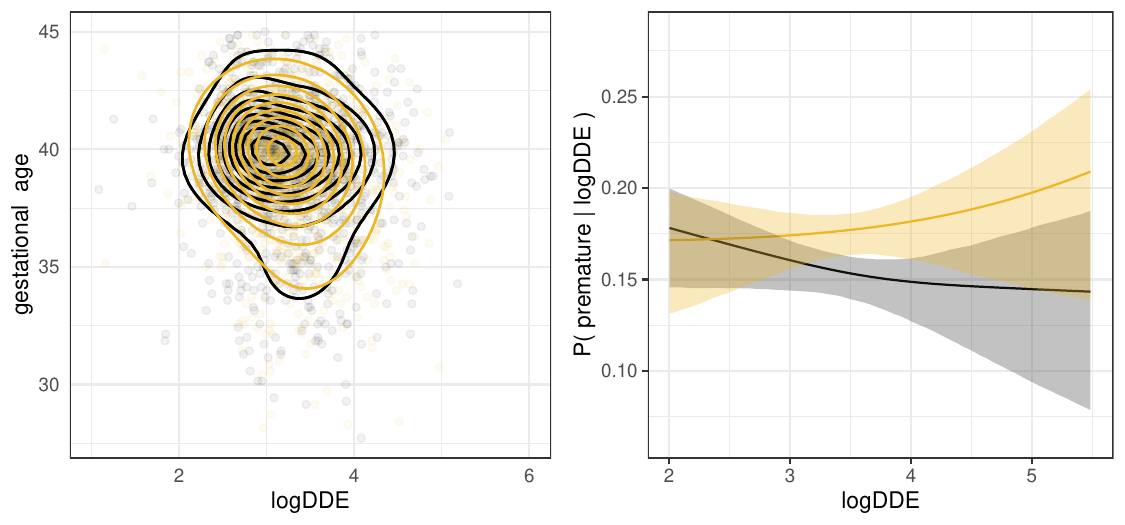} 
	\caption{CPP cross-hospital data. Left: observations and contour curves of the estimated joint posterior density of gestational age and DDE, for smokers (yellow dots and curves) and non-smokers (black dots and curves). 
		Right: estimated probability of premature birth (gestational age below 37 weeks), conditionally on the level of DDE, for smokers (yellow curves) and non smokers (black curves), and associated pointwise  90\% quantile-based posterior credible bands (filled areas).}  
	\label{fig:sim_biv}
\end{figure*} 

\subsection{Multi-hospital analysis}\label{sec:12_hosp}

The same data set as in the previous section is considered here, with observations classified according to both smoking habits of women and the hospitals where they were enrolled. This leads to two samples stratified into $L=12$ strata, with cardinalities summarized by the vectors
\begin{equation*}
\bn_1 =(n_{1,1},n_{2,1},\ldots,n_{12,1})=(236,51,59,38,92,56,67,51,61,187,81,44)
\end{equation*}
and 
\begin{equation*}
\bn_2=(n_{1,2},n_{2,2},\ldots,n_{12,2})=(245, 73, 91, 39, 113, 98, 74, 90, 56, 197, 70, 144)
\end{equation*}
for smokers and non-smokers, respectively. The focus of the analysis is modelling the distribution of gestational age.
 
\subsubsection{A mixture model for partially exchangeable data}

Smokers and non-smokers data are analyzed independently. For each group, heterogeneity across hospitals suggests to assume that data are partially exchangeable in the sense of \citet{deF38}. To account for this assumption, we consider a mixture model for partially exchangeable data, where the stratum-specific mixing random probability measures form the components of a dependent Dirichlet process. Within this flexible class of processes \citep[see][and references therein]{Fot15}, we consider the Griffiths-Milne dependent Dirichlet processes (GM-DDP), as defined and studied in \citet{Lij14a,Lij14b}. For an allied approach see \citet{Gri13}. 
Let  $X_{i,l}$ be the gestational age of the $i$-th woman in the $l$-th hospital, and $\btheta_{l}$ be the vector of latent variables $\theta_{i,l}$ referring to the $l$-th hospital. The mixture model can be represented in its hierarchical form as
\begin{align}\label{mod:ddp}
X_{i,l}\mid \btheta_{1},\ldots,\btheta_{L}&\simind 
\kernel(x_{i,l};\theta_{i,l})\notag\\
\theta_{i_{l},l}\mid (\tilde p_1,\ldots,\tilde p_{L}) &\simiid \tilde p_{l}\\[4pt]
(\tilde p_1,\ldots,\tilde p_{L})&\sim \text{GM-DDP}(\vartheta,z;P_0),\notag
\end{align}
with $l =1,\ldots,L, i = 1,\ldots,n_l\notag$, $\vartheta>0$, $z\in(0,1)$, $P_0$ is a probability distribution on $\R\times\R^+$, and the GM-DDP distribution of the vector $(\tilde p_1,\ldots,\tilde p_{L})$ coincides with the distribution of the vector of random probability measures whose components are defined, for every $l=1,\ldots,L$, as
\begin{equation*}
\tilde p_l=\gamma_{l} \, w_{l}+\gamma_0 \,(1-w_{l}),
\end{equation*}
where $\gamma_1,\ldots,\gamma_{L}\simiid DP(\vartheta z;P_0)$ and $\gamma_0\sim DP(\vartheta(1-z);P_0)$ is independent of $\gamma_l$, for any $l=1,\ldots,L$. Moreover, the vector of random weights $\bw=(w_1, \dots, w_L)$, taking values in $[0,1]^L$, is distributed as a multivariate beta of parameters $(\vartheta z,\ldots,\vartheta z,\vartheta(1-z))$, as defined in \citet{Olk03}, and its components are independent of the random probability measures $\gamma_0,\gamma_1,\ldots,\gamma_L$. As a result, the random probabilities $\tilde p_l$ are, marginally, identically distributed with $\tilde p_l \sim DP(\vartheta;P_0)$ \citep[see][for details]{Lij14a}. 

\subsubsection{ICS for GM-DDP mixture model and its application}\label{sec:ICS_GMDDP}

The ICS can be easily adapted to a variety of models. For example, it naturally fits the partially exchangeable framework of model \eqref{mod:ddp}. 
The ICS algorithm for GM-DDP mixture models is described in Algorithm~\ref{algo:ICS_GMDDP}, and consists of three main steps. First, conditionally on the allocation of observations to clusters referring to either the idiosyncratic process $\gamma_l$, with $l=1,\ldots,L$, or the common process $\gamma_0$, summaries of all the processes, that is $(\bs_{l}, \bt_{l}, \bp_{l})$, for $l=0,\dots,L$, are updated as done in Section~\ref{sec:approach} for a single process, with the proviso that $\sigma=0$. Second, the latent variables $\theta_{i,l}$ are updated for every $l = 1, \dots, L$ and $1 \leq i \leq n_l$; and, third, the components of $\bw$ are sampled. The full conditional distributions for $\theta_{i,l}$ and $\bw$ are provided in Appendix~\ref{app:ICS_GMDDP}.

Model \eqref{mod:ddp} is specified by assuming a univariate Gaussian kernel and normal-inverse gamma base measure $P_0 = N\text{-}IG(0, 5, 4, 1)$. Moreover, the specification $\vartheta=1$ and $z=0.5$ is adopted, with the latter choice corresponding to equal prior weights assigned to idiosyncratic and common components $\gamma_l$ and $\gamma_0$.  The ICS algorithm for the GM-DDP mixture model was run for 10\,000 iterations, the first 5\,000 of which were discarded as burn-in. Estimating posterior densities for smokers and non-smokers, required a total runtime of less than two and a half minutes. Convergence of the chains was assessed by visually investigating the trace plots, which did not provide any evidence against it.  

\begin{figure*}[!ht]
	\centering
	\includegraphics[width=0.88\textwidth]{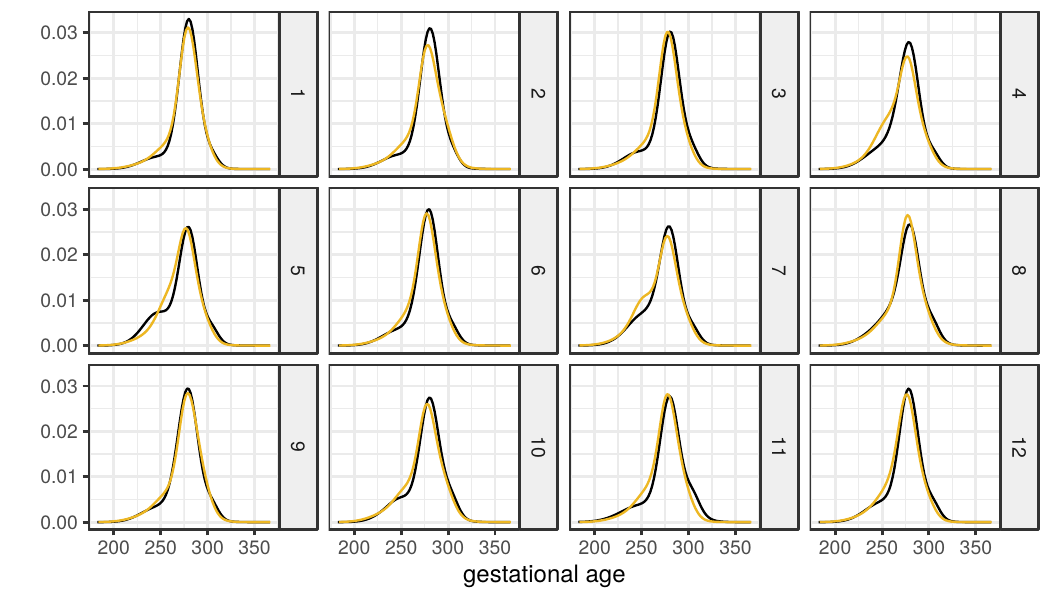}
	\caption{CPP multi-hospital data. Estimated densities of  the gestational age for the 12 hospitals, with comparison between smokers (yellow curves) and non-smokers (black curves).}
	\label{fig:sim_gal}
\end{figure*}

Figure~\ref{fig:sim_gal} shows the estimated densities of the gestational age, for each stratum, with a comparison between smokers and non-smokers. The distribution for smokers is globally more skewed and shifted to the left than the one for non-smokers, indicating an expected more adverse effect of smoke on gestational age.

\section{Discussion}\label{sec:conc}

We proposed a new sampling strategy for PY mixture models, named ICS, which combines desirable properties of existing marginal and conditional methods: the ICS shares easy interpretability with marginal methods, while allowing, likewise conditional samplers, for a parallelizable update of the latent parameters $\btheta$, and for a straightforward quantification of posterior uncertainty. The simulation study of Section~\ref{sec:simul} showed that the ICS overtakes some of the computational bottlenecks characterizing the conditional methods considered in the comparison. Specifically, the ICS can be implemented for any value of the discount parameter $\sigma$, with its efficiency being stable  to the specification of $\sigma$. This is appealing as the discount parameter plays a crucial modelling role when PY mixture models are used for model-based clustering: the ICS allows for an efficient implementation of such models, without the need of setting artificial constraints on the value of $\sigma$.  As far as the comparison of the performances of ICS and other algorithms is concerned, it is important to remark that the independent slice-efficient algorithm proposed by \citet{Kal11} is more general than the one considered in Section~\ref{sec:simul} as other specifications of the deterministic sequence $\xi_1,\xi_2,\ldots$ are possible. As nicely discussed by \citet{Kal11}, the choice of such sequence ``is a delicate issue and any choice has to balance efficiency and computational time''. Alternative specifications of the sequence may be explored on a case-by-case basis but, in our experience, the computational time can be reduced only at the cost of worsening the mixing of the algorithm. It is also worth remarking that the ICS does not rely on any assumption of conjugacy between base measure and kernel, and thus it can be considered by all means a valid alternative to the celebrated Algorithm 8 of \citet{Nea00} when non-conjugate mixture models are to be implemented. Finally, while originally introduced to overtake computational problems arising in the implementation of algorithms for PY mixture models, the idea behind the ICS approach can be naturally extended to other classes of computationally demanding models. As an example, we implemented the same idea to deal with posterior inference based on a flexible class of mixture models for partially exchangeable data. Other extensions are also possible and are currently subject of ongoing research. \\

\noindent \textbf{Acknowledgments.}	The first author is supported by the University of Padova under the STARS Grant. The second and the third authors are grateful to the DEMS Data Science Lab for supporting this work by providing computational resources.

\appendix
\counterwithin{figure}{section}
\counterwithin{table}{section}
\counterwithin{proposition}{section}
\counterwithin{algocf}{section}

\section{On the number of jumps to be drawn with the slice sampler}\label{sec:jumps}
\label{sec:asymptoticexplanation}

Let $N_n$ be the random number of jumps which need to be drawn at each iteration of a slice sampler \citep{Wal07} or, equivalently, its dependent slice-efficient version \citep{Kal11}, implemented to carry out posterior inference based on a sample of size $n$. Conditionally on the cluster assignment variables $c_1,\ldots,c_n$ and on the weights $p_{c_1},\ldots,p_{c_n}$ of the non-empty components of the mixture, $N_n$ is given by
\begin{equation*}
N_n=\min\left\{l\geq 1 \,:\, \sum_{j\leq l} p_j>1-\min(U_1p_{c_1},\ldots,U_np_{c_n})\right\},
\end{equation*}
where the random weights $p_j$'s are defined as in \eqref{eq:stick} and $U_1,\ldots,U_n$ are independent uniform random variables, independent of the weights $p_j$'s. We next define a second random variable $M_n$, function of the same uniform random variables $U_1,\ldots,U_n$, as
\begin{align*}
M_n&= \min\left\{l\geq 1 \,:\, \sum_{j\leq l} p_j>1-\min(U_1,\ldots,U_n)\right\}\\
&\leq \min\left\{l\geq 1 \,:\, \sum_{j\leq l} p_j>1-\min(U_1p_{c_1},\ldots,U_np_{c_n})\right\}=N_n.
\end{align*}
The random number $M_n$ is thus a data-free lower bound for $N_n$, where the inequality $M_n(\omega)\leq N_n(\omega)$ holds for every $\omega \in \Omega$. Studying the distribution of $M_n$ will shed light on the distribution of its upper bound $N_n$. Interestingly, $M_n$ represents also the random number of jumps to be drawn in order to generate a sample of size $n$ from a PY by adapting the retrospective sampling idea of \citet{Pap08}, described in their Section 2 for the DP case. The distribution of $M_n$ coincides with the distribution of $\min\left\{l\geq 1 \,:\, \prod_{j\leq l} (1-V_j)<B_n\right\}$, 
where the stick-breaking variables $(V_j)_{j\geq 1}$ are defined as in \eqref{eq:stick} and $B_n$ is a beta random variable with parameters $1$ and $n$. 
Following \citet{Mul98}, it is easy to show that, when $\sigma=0$, then $M_n-1$ is distributed as a mixture of Poisson distributions, specifically $(M_n-1)\sim \text{Poisson}(\vartheta \log(1/B_n))$. This leads to $\E[M_n]=\vartheta H_n +1$, where $H_{n}=\sum_{l=1}^n l^{-1}$ is the $n$-th harmonic number. It is worth noting that, for $n\rightarrow\infty$, $\E[M_n]\approx \vartheta \log(n)$, that is the growth is logarithmic in $n$, while the contribution of $\vartheta$ is linear. As for the PY process, we resort to \citet{Arb18}, where the asymptotic distribution of the minimum number of jumps of a PY, needed to guarantee that the truncation error is smaller than a deterministic threshold, is studied. We introduce the notation $a_n\simas b_n$ to indicate that $\P(\lim_{n\rightarrow \infty}a_n/b_n=1)=1$ and, by exploiting Theorem 2 in \citet{Arb18}, we prove the following proposition. 

\begin{proposition}\label{prop:arbel}
	Let $M_n=\min\left\{l\geq 1 \,:\, \prod_{j\leq l} (1-V_j)<B_n\right\}$ where the sequence $(V_j)_{j\geq 1}$ is defined as in \eqref{eq:stick} and $B_n$ is a beta random variable with parameters $1$ and $n$. Then, for $n\rightarrow \infty$,
	\begin{equation}\label{eq:limPY}
	M_n-1 \simas \left(\frac{B_n T_{\sigma,\vartheta}}{\sigma}\right)^{-\sigma/(1-\sigma)},
	\end{equation}
	where $T_{\sigma,\vartheta}$, independent of $B_n$, is a polynomially tilted stable random variable \citep{Dev09}, with probability density function proportional to $t^{-\vartheta} f_{\sigma}(x)$, 
	where $f_\sigma$ is the density function of a unilateral stable random variable with Laplace transform equal to $\exp\{-\lambda^\sigma\}$.
\end{proposition}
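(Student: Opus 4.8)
The plan is to deduce the statement from the almost sure large-$l$ behaviour of the residual stick-breaking mass $R_l:=\prod_{j\le l}(1-V_j)$ — precisely the quantity governing the truncation error analysed in Theorem~2 of \citet{Arb18} — and then to push that behaviour through to the first-passage level $M_n$ by a sandwich argument, exploiting that the random threshold $B_n$ tends to $0$ as $n\to\infty$. Throughout, $\sigma\in(0,1)$.

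First I would collect the elementary properties of $M_n$. Since each $1-V_j\in(0,1)$ almost surely, $(R_l)_{l\ge1}$ is strictly decreasing, and $R_l\to0$ almost surely: the independent $V_j$'s have $\sum_j\E[V_j]=\infty$ while $\sum_j\mathrm{Var}(V_j)<\infty$, so $\sum_j V_j=\infty$ a.s. and hence $\prod_j(1-V_j)=0$ a.s.\ (alternatively this is a byproduct of Theorem~2 of \citet{Arb18}). Thus $M_n=\min\{l\ge1:R_l<B_n\}$ is almost surely finite and is characterised by $R_{M_n}<B_n\le R_{M_n-1}$. Moreover, because $B_n\sim\mbox{Beta}(1,n)$ we have $\sum_n\P(B_n>\varepsilon)=\sum_n(1-\varepsilon)^n<\infty$ for every $\varepsilon\in(0,1)$, so $B_n\to0$ a.s.\ by Borel--Cantelli; since $R_l>0$ for each fixed $l$, this forces $M_n\to\infty$ a.s.

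Next I would invoke \citet{Arb18}: their Theorem~2 yields a polynomially tilted stable random variable $T_{\sigma,\vartheta}$, measurable with respect to $(V_j)_{j\ge1}$ — and therefore independent of $B_n$, the latter not depending on the stick-breaking variables — such that $l^{(1-\sigma)/\sigma}R_l\to\sigma/T_{\sigma,\vartheta}$ almost surely as $l\to\infty$; in particular $R_l/R_{l-1}=1-V_l\to1$ a.s. Combining the two ingredients on the almost sure event where both hold and $M_n\to\infty$: the sandwich $R_{M_n}<B_n\le R_{M_n-1}=R_{M_n}/(1-V_{M_n})$ gives $1<B_n/R_{M_n}\le 1/(1-V_{M_n})\to1$, so $B_n\simas R_{M_n}$; evaluating the convergent sequence $l\mapsto l^{(1-\sigma)/\sigma}R_l$ along the random indices $l=M_n\to\infty$ gives $M_n^{(1-\sigma)/\sigma}R_{M_n}\to\sigma/T_{\sigma,\vartheta}$ a.s.; hence $M_n^{(1-\sigma)/\sigma}\simas\sigma/(B_nT_{\sigma,\vartheta})$, and raising to the power $\sigma/(1-\sigma)$ gives $M_n\simas(B_nT_{\sigma,\vartheta}/\sigma)^{-\sigma/(1-\sigma)}$. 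Since $M_n\to\infty$ a.s., $(M_n-1)/M_n\to1$, so $M_n$ may be replaced by $M_n-1$, which is \eqref{eq:limPY}.

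The main obstacle is the step invoking \citet{Arb18}: one has to check that the object whose tail is described there coincides with $R_l=\prod_{j\le l}(1-V_j)$ for the stick-breaking weights in \eqref{eq:stick}, and to read off from their Theorem~2 the exact normalisation — in particular that the limit constant equals $\sigma/T_{\sigma,\vartheta}$ with $T_{\sigma,\vartheta}$ having density proportional to $t^{-\vartheta}f_\sigma(t)$. The rest is routine: the passage from the deterministic threshold used in \citet{Arb18} to the random threshold $B_n$ only uses $B_n\to0$ a.s.\ and $B_n\perp(V_j)_{j\ge1}$, and the sandwich is elementary once $1-V_l\to1$ is available.
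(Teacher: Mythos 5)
Your proof is correct and follows essentially the same route as the paper: both arguments reduce the claim to the almost sure asymptotics supplied by Theorem~2 of \citet{Arb18} together with the fact that $B_n\to 0$ almost surely, and then compose the two. The only difference is one of packaging — the paper directly quotes the deterministic-threshold first-passage asymptotic $M(\epsilon)-1\simas(\epsilon T_{\sigma,\vartheta}/\sigma)^{-\sigma/(1-\sigma)}$ and handles the random threshold via an inclusion of null events, whereas you start from the residual-mass form $l^{(1-\sigma)/\sigma}\prod_{j\le l}(1-V_j)\to\sigma/T_{\sigma,\vartheta}$ and recover the first-passage statement yourself through the sandwich $R_{M_n}<B_n\le R_{M_n-1}$, which makes the inversion step and the independence of $T_{\sigma,\vartheta}$ from $B_n$ more explicit than in the paper.
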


\begin{proof}
	Define $M(\epsilon)=\min\left\{l\geq 1 \,:\, \prod_{j\leq l} (1-V_j)<\epsilon \right\}$. Following \citet{Arb18},
	\begin{equation}
	M(\epsilon)-1 \simas \left(\frac{\epsilon T_{\sigma,\vartheta}}{\sigma}\right)^{-\sigma/(1-\sigma)},
	\end{equation}
	as $\epsilon\rightarrow 0$. Observe that $M_n=M(B_n)$ and that $B_n\simas 0$ as $n\rightarrow \infty$. We then define the events
	\begin{align*}
	A&=\left\{\omega \in \Omega : M(\epsilon) \not\sim_{\text{a.s.}}\left(\epsilon T_{\sigma,\vartheta}/\sigma\right)^{-\sigma/(1-\sigma)} \text{ as } \epsilon\rightarrow 0\right\}\\
	B&=\left\{\omega \in \Omega : B_n \not\sim_{\text{a.s.}} 0 \text{ as } n \rightarrow \infty \right\}\\
	C&=\left\{\omega \in \Omega : M_n \not\sim_{\text{a.s.}}\left(B_n T_{\sigma,\vartheta}/\sigma\right)^{-\sigma/(1-\sigma)} \text{ as } n\rightarrow \infty\right\}
	\end{align*}
	and observe that $C\subset A\cup B$. Which implies that $\P(C)\leq \P(A\cup B)\leq \P(A)+\P(B)=0$.
\end{proof}

If we define $L_n= \left(B_n T_{\sigma,\vartheta}/\sigma \right)^{-\sigma/(1-\sigma)}$, for any positive integer $n$, the statement of Proposition~\ref{prop:arbel} is tantamount to $M_n-1\simas L_n$ as $n\rightarrow \infty$. The random variable $L_n$ has finite mean if and only if $\sigma\in(0,1/2)$, case in which
$\E[L_n]=c_{\sigma,\vartheta}\Gamma(n+1)/\Gamma(n+2-1/(1-\sigma))$, where 
\begin{equation*}
c_{\sigma,\vartheta}=\sigma^{\sigma/(1-\sigma)}\frac{\Gamma(2-1/(1-\sigma))\Gamma(1+\vartheta/\sigma+1/(1-\sigma))}{\Gamma(\vartheta+1/(1-\sigma))},
\end{equation*}
which implies that $\E[L_n]\approx c_{\sigma,\theta} n^{\sigma/(1-\sigma)}$, when $n\rightarrow \infty$. 
A simple simulation experiment was run to empirically investigate the quality of the asymptotic approximation of $M_n$ provided by $L_n$. The random variable $T_{\sigma,\vartheta}$ appearing in the defintion of $L_n$ was sampled by resorting to \citet{Hof11}. Figure~\ref{fig_Ln_Mn_106} displays the estimated probability of the events $M_n>10^6$ and $L_n>10^6$, as a function of $\sigma\in(0,1)$, for $\vartheta\in\{0.1,1,10\}$ and for different sample sizes $n\in\{100,1\,000,10\,000\}$.

\begin{figure*}[!ht]
\begin{center}
\includegraphics[width=0.9\textwidth]{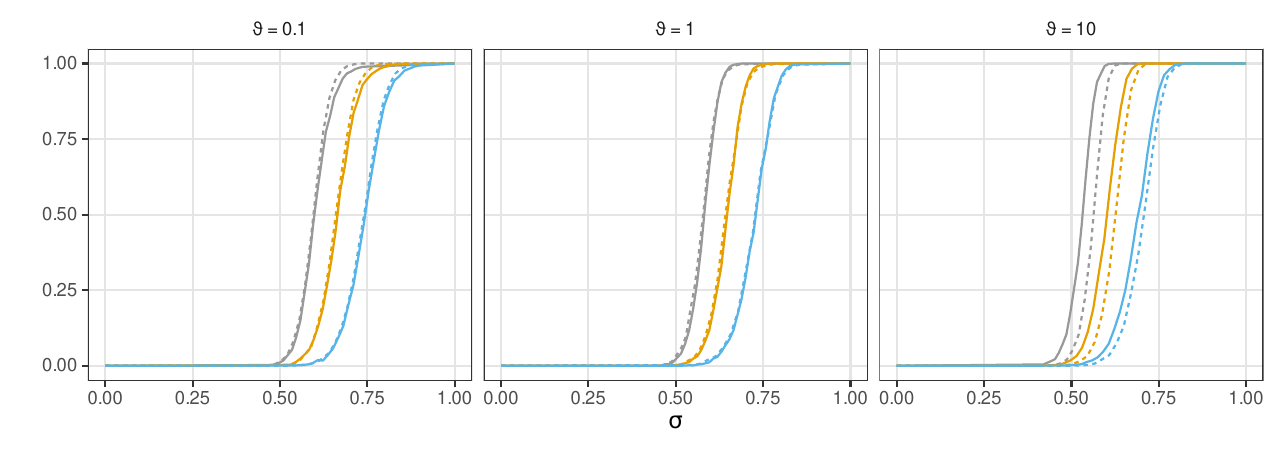}	
\caption{Estimated values for $\P(M_n>10^6)$ (solid curves) and $\P(L_n>10^6)$ (dashed curves) as a function of $\sigma\in(0,1)$, for $n=100$ (blue), $n=1\,000$ (orange), $n=10\,000$ (gray), and for $\theta=0.1$ (left panel), $\theta=1$ (middle panel), $\theta=10$ (right panel).}
	\label{fig_Ln_Mn_106}
\end{center}
\end{figure*}

\section{Additional details on the simulation study}
\label{sec:othersimresults}

This section provides additional results of the simulation study presented in Section \ref{sec:simul}. Table \ref{tab:bound} reports on the number of times the upper bound for the number of jumps drawn at each iteration of dependent and independent slice-efficient samplers was reached. Figures \ref{fig:sim_m_dev}, \ref{fig:sim1dev} and \ref{fig:sim2dev} focus on the functional deviance and display results analogous to those presented in Section \ref{sec:simul} for the random variable number of clusters.

\begin{table}[h!]
	\centering
	\begin{tabular}{llrr}
		&            & \multicolumn{1}{l}{I-SE} & \multicolumn{1}{l}{D-SE} \\ \cline{1-4} 
		\multirow{3}{*}{$\vartheta = 1$}   &  $n=100$     & 0.00 & 0.00\\
		&  $n=250$     & 0.00 & 0.00\\
		&  $n=1000$    & 0.00 & 0.00\\ \hline
		\multirow{3}{*}{$\vartheta = 10$}  &  $n=100$     & 0.06 & 0.07\\
		&  $n=250$     & 0.8 & 0.14\\
		&  $n=1000$    & 0.16 & 0.31\\\hline
		\multirow{3}{*}{$\vartheta = 25$}  &  $n=100$     & 0.30 & 0.39\\
		&  $n=250$     & 0.39 & 0.54\\
		&  $n=1000$    & 0.63 & 0.89\\
		\hline
	\end{tabular}
	\caption{Relative frequency of the of times that the bound $10^5$ on the number of jumps is reached when $\sigma = 0.4$ for the independent slice-efficient algorithm (I-SE) and the dependent slice-efficient algorithm (D-SE).}\label{tab:bound}
\end{table}

\begin{figure*}[h!]
	\begin{center}
		\includegraphics[width=0.9\textwidth]{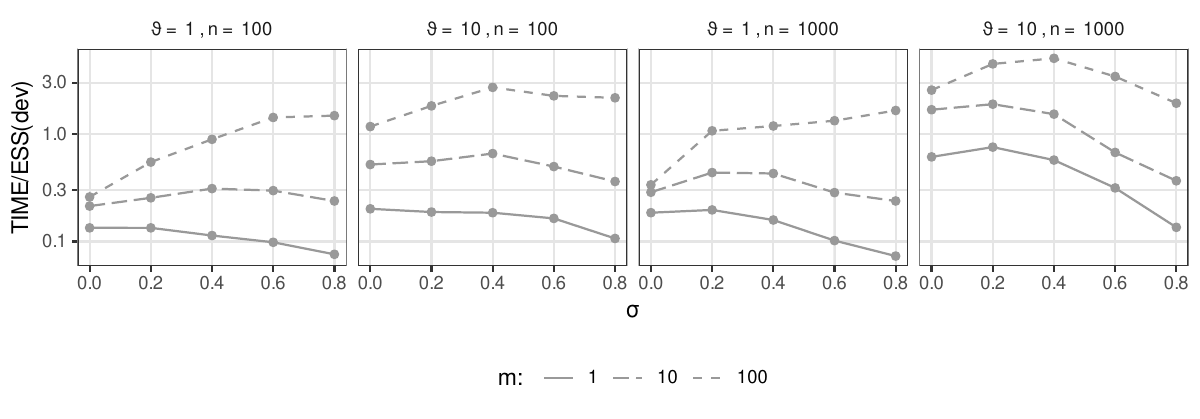}
	\caption{Simulated data. ICS: ratio between runtime (in seconds) and ESS computed on the deviance on a log-scale. Results are averaged over $10$ replicates.}
		\label{fig:sim_m_dev}
	\end{center}
\end{figure*}

\begin{figure*}[h!]
	\begin{center}
		\includegraphics[width=0.75\textwidth]{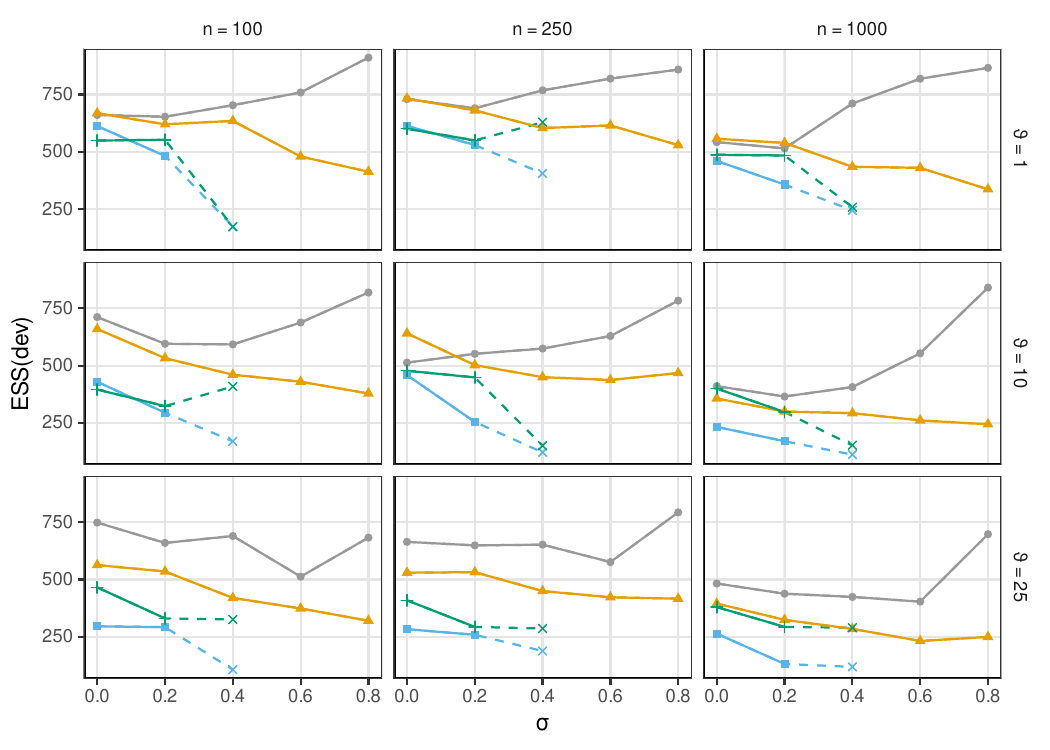}
		\caption{Simulated data. ESS computed on the deviance, for ICS (gray), marginal sampler (orange), independent slice-efficient sampler (green) and dependent slice-efficient sampler (blue). Results are averaged over $10$ replicates. The $\times$-shaped marker for the two slice samplers indicates that, when $\sigma=0.4$, the value of the ESS is obtained with an arbitrary upper bound at $10^5$ for the number of jumps drawn per iteration.}
		\label{fig:sim1dev}
	\end{center}
	\end{figure*}

\begin{figure*}[!ht]
	\begin{center}
		\includegraphics[width=0.75\textwidth]{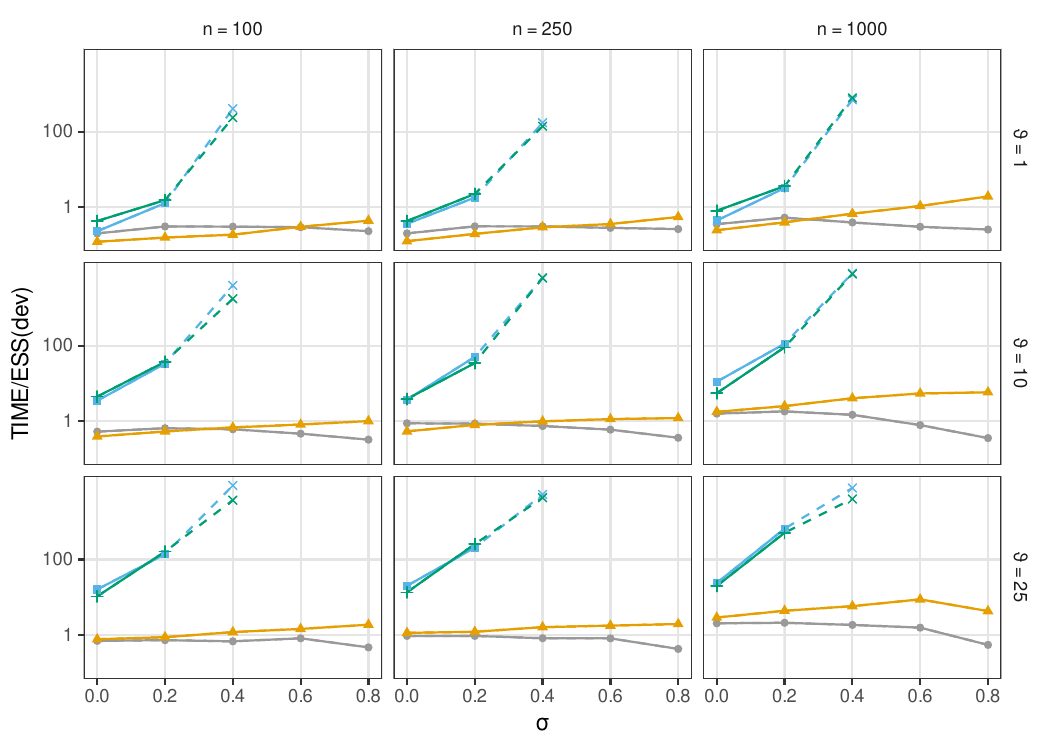}
		\caption{Simulated data. Ratio of runtime (in seconds) over ESS computed on the deviance, in log-scale, for ICS (gray), marginal sampler (orange), independent slice-efficient sampler (green) and dependent slice-efficient sampler (blue). Results are averaged over $10$ replicates. The $\times$-shaped marker for the two slice samplers indicates that, when $\sigma=0.4$, the value of time/ESS is obtained with an arbitrary upper bound at $10^5$ for the number of jumps drawn per iteration.}
		\label{fig:sim2dev}
	\end{center}
\end{figure*}

\section{ICS for GM-DDP}\label{app:ICS_GMDDP}
In order to describe the full conditional distributions of $\theta_{i,l}$ and $\bw$, and to provide the pseudo-code of the ICS for the GM-DDP mixture model, some notation needs to be introduced. Let $r_{m,0}$ and $r_{m,l}$, for $l=1,\ldots,L$, represent the number of distinct values $s_{j,0}^*$ and $s_{j,l}^*$ appearing in the vectors $\bs_0$ and $\bs_l$, respectively. The corresponding frequencies are given by $m_{j,0}$ and $m_{j,l}$, and are such that $\sum_{j=1}^{k_{m,l}}m_{j,l}=m$, for every $l=0,1,\ldots,L$. Let $\btheta_0^*$ be the vector of distinct values appearing in $(\btheta_{1},\ldots,\btheta_{L})$ coinciding with either the $k_{\mathbf{n},0}$ fixed jump points $\bt_0$ of the common process $\gamma_0$ or with any of the $r_{m,0}$ values appearing in $\bs_0$. Similarly, for any $l=1,\ldots,L$, $\btheta_l^*$ denotes the vector of distinct values appearing in $(\btheta_{1},\ldots,\btheta_{L})$ coinciding with either the $k_{\mathbf{n},l}$ fixed jump points $\bt_l$ of the idiosyncratic process $\gamma_l$ or with any of the $r_{m,l}$ values appearing in $\bs_l$. Finally, we let $\mathcal{C}_{j,0}=\{(i,l)\;:\;\theta_{i,l}=\theta^*_{j,0}\}$ and, for $l=1,\ldots,L$, $\mathcal{C}_{j,l}=\{i\;:\;\theta_{i,l}=\theta^*_{j,l}\}$.

The full conditional distribution of $\theta_{i,l}$, for every $l = 1, \dots, L$ and $1 \leq i \leq n_l$, is given, up to a proportionality constant, by\\[-0.7cm]
\begin{multline*}
\P(\theta_{i,l} \in \d t| \ldots ) 
\propto w_l \left(p_{0,l} \sum_{j=1}^{r_{m,l}} \frac{m_{j,l}}{m} \kernel(X_{i,l}, s_{j,l}^*)\delta_{s_{j,l}^*}(\d t) + \sum_{j = 1}^{k_{\bn,l}}  p_{j,l} \kernel(X_{i,l},t_{j,l}^*) \delta_{t_{j,l}^*} (\d t)\right)\\
+ (1 -w_{l}) \left(p_{0,0} \sum_{j=1}^{r_{m,0}} \frac{m_{j,0}}{m} \kernel(X_{i,l}, s_{j,0}^*)\delta_{s_{j,0}^*}(\d t)+ \sum_{j = 1}^{k_{\bn,0}}  p_{j,0} \kernel(X_{i,l},t_{j,0}^*) \delta_{t_{j,0}^*} (\d t)\right).
\end{multline*}

The full conditional for $\bw$ is given, up to a proportionality constant, by\\[-0.5cm]
\begin{multline}\label{eq:ICS-DDP-W}
\P(\bw = (v_1,\ldots,v_L) \mid \dots) \\\propto
\prod_{l = 1}^L \frac{ v_{l}^{ \vartheta z - 1}}{(1 - v_l)^{\vartheta z + 1}} \prod_{i=1}^{n_l}\left(v_l q_{i,l}^{(l)}+(1-v_l)q_{i,l}^{(0)}\right) \left( 1 + \sum_{l=1}^L \frac{v_l}{1 - v_l} \right)^{-L \vartheta z - \theta(1-z)}
\end{multline}
where 
\begin{equation*}
q_{i,l}^{(l)}=p_{0,l} \sum_{j=1}^{r_{m,l}} \frac{m_{j,l}}{m} \kernel(X_{i,l}, s_{j,l}^*)\delta_{s_{j,l}^*}(\d t) + \sum_{j = 1}^{k_{\bn,l}}  p_{j,l} \kernel(X_{i,l},t_{j,l}^*) \delta_{t_{j,l}^*} (\d t),
\end{equation*}
\begin{equation*}
q_{i,l}^{(0)}= p_{0,0} \sum_{j=1}^{r_{m,0}} \frac{m_{j,0}}{m} \kernel(X_{i,l}, s_{j,0}^*)\delta_{s_{j,0}^*}(\d t)+ \sum_{j = 1}^{k_{\bn,0}}  p_{j,0} \kernel(X_{i,l},t_{j,0}^*) \delta_{t_{j,0}^*} (\d t).
\end{equation*}

The pseudo-code of the ICS for the GM-DDP mixture model is presented in Algorithm~\ref{algo:ICS_GMDDP}.

\SetNlSty{textbf}{[}{]}
\begin{algorithm*}
	\DontPrintSemicolon
	\textbf{set} \textsl{admissible initial values for $\btheta_{l}^{(0)}$, for $l=1,\ldots,L$};\\
	\For{each iteration $r = 1,\dots,R$}    
	{ 
		\textbf{set} $\bt_0^{(r)}=\btheta_0^{*(r-1)}$;\\ 		\textbf{sample} $\bp_0^{(r)}$ \textsl{from} $\bp_0^{(r)} \sim\text{Dirichlet}(c (1-z) ,n_{1,0}^{(r-1)} ,\ldots,n_{k_{\mathbf{n},0},0}^{(r-1)})$;\\
		\textbf{sample} $\bs_0^{(r)}$ 
		\textsl{from a} $DP(c(1-z); P_0)$;\\
		\For{each urn $l = 1, \dots, L$}{	
			\textbf{set} $\bt_l^{(r)}=\btheta_l^{*(r-1)}$;\\ 
			\textbf{sample} $\bp_l^{(r)}$ \textsl{from} $\bp_l^{(r)} \sim\text{Dirichlet}(c z ,n_{1,l}^{(r-1)} ,\ldots,n_{k_{\mathbf{n},l},l}^{(r-1)})$;\\
			\textbf{sample} $\bs_l^{(r)}$ 
			\textsl{from a} $DP(cz; P_0)$;\\			
		}
		\textbf{sample} $\bw^{(r)}$ \textsl{from 
			\eqref{eq:ICS-DDP-W}};\\
		\For{each $i=1,\ldots,n_l;\;l=1,\dots,L$}   
		{
			\textbf{sample} $\theta_{i,l}^{(r)}$ \textsl{from}
			\begin{equation*}
			\P(\theta_{i,l}^{(r)}= t \mid \cdots )\propto
			\begin{cases}
			w_l p_{0,l}^{(r)} \frac{m_{j,l}^{(r)}}{m} k(X_{i,l};s_{j,l}^{*(r)})&\text{ \textsl{if} }t \in \{s_{1,l}^{*(r)},\ldots,s_{r_{m,l}^{(r)}}^{*(r)}\}\\[6pt]
			w_l p_{j,l}^{(r)} k(X_{i,l};t^{*(r)}_{j,l})&\text{ \textsl{if} }t \in \{t_{1,l}^{*(r)},\ldots,t_{k_{\bn,l}^{(r-1)},l}^{*(r)}\}\\[6pt]
			(1-w_l)p_{0,0}^{(r)} \frac{m_{j,0}^{(r)}}{m} k(X_{i,l};s_{j,0}^{(r)})&\text{ \textsl{if} }t \in \{s_{1,0}^{*(r)},\ldots,s_{r_{m,0}^{(r)}}^{*(r)}\}\\[6pt]
			(1-w_l)p_{j,0}^{(r)} k(X_{i,l};t^{*(r)}_{j,0})&\text{ if }t \in \{t_{1,0}^{*(r)},\ldots,t_{k_{\bn,0}^{(r-1)},0}^{*(r)}\}\\[6pt]
			0 &\text{ \textsl{otherwise}}
			\end{cases}
			\end{equation*}
		}
		\For{each element $\theta_{j,0}^{*(r)}$ in $\btheta_0^{*(r)}$}
		{
			\vspace{0.1cm}
			\textbf{let} \textsl{$ \mathcal{C}_{j,0}^{(r)}$ be the set of pairs $(i,l)$ such that } $\theta_{i,l}^{(r)}=\theta_{j,0}^{*(r)};$\\
			\textbf{update} $\theta_{j,0}^{*(r)}$ \textsl{from}
			\begin{equation*}\P(\theta_{j,0}^{*(r)}\in \d t \mid \cdots ) \propto P_0(\d t) 
			\prod_{(i,l)\in \mathcal{C}^{(r)}_{j,0}}\kernel(X_{i,l};t);\end{equation*}
		}
		\For{each element $\theta_{j,l}^{*(r)}$ in $\btheta_l^{*(r)}$, $l=1,\dots,L$}
		{
			\vspace{0.1cm}
			\textbf{let} \textsl{$ \mathcal{C}_{j,l}^{(r)}$ be the set of index pairs indexes $i$ such that} $\theta_{i,l}^{(r)}=\theta_{j,l}^{*(r)};$\\
			\textbf{update} $\theta_{j,l}^{*(r)}$ \textsl{from}
			\begin{equation*}\P(\theta_{j,l}^{*(r)}\in \d t \mid \cdots ) \propto P_0(\d t) 
			\prod_{i\in\mathcal{C}_{j,l}^{(r)}} \kernel(X_{i,l};t);\end{equation*}
		}
	}
	\textbf{end}
	\caption{\label{algo:ICS_GMDDP}ICS for GM-DDP mixture model}
\end{algorithm*}

\section{Implementation of the algorithms in \mbox{Section} \ref{sec:simul}}\label{ap:competitors}
This section reports the pseudo-code of marginal, dependent slice-efficient and independennt slice-efficient samplers, algorithms which were implemented for the performance comparison described in Section~\ref{sec:simul}. 
For the sake of simplicity, all the algorithms are described without specifying prior distributions for the hyperparameters. 
Algorithm~\ref{algo:MARsampler} is based on \citet{Esc95}. 
Algorithms~\ref{algo:SLIsampler} and \ref{algo:SLI2sampler} are implemented by following the dependent and independent slice-efficient versions of the slice sampler described in \citet{Kal11}.

\SetNlSty{textbf}{[}{]}
\begin{algorithm}
	\DontPrintSemicolon
	\textbf{set}\textsl{ admissible initial values $\btheta^{(0)}$};\\
	\For{each iteration $r = 1,\dots,R$}    
	{ 
		\For{each $i=1,\ldots,n$}   
		{
			\textbf{let}  $k_{\backslash i}$ \textsl{be the number of distinct values in $\btheta^{(r)}_{\backslash i}=(\theta_1,\ldots,\theta_{i-1},\theta_{i+1},\ldots,\theta_n)$ 
				and $n_j$, for $j = 1, \dots k_{\backslash i}$,  the corresponding frequencies};\\ 
			\vspace{0.1cm}
			\textbf{sample} $\theta_i^{(r)}$ \textsl{from}
			\vspace{-0.3cm}
			\[
			\P(\theta_i^{(r)}=t\mid \dots) \propto
			\begin{cases}
			(n_j - \sigma) \kernel(X_i;\theta_j^{*(r)} )&  \text{\textsl{if} } t = \theta_j^{*(r)}\\
			\qquad\qquad\qquad\qquad\quad\text{ \textsl{and} } &j\in\{1,\ldots,k_{\backslash i}\} 
			\\[8pt]
			(\vartheta + \sigma k_{\backslash i}) \int \kernel(X_i, \theta) P_0(\d \theta)& \text{\textsl{otherwise}}
			\end{cases}
			\]
		}
		
		\For{each unique value $\theta_j^{*(r)}$ in $\btheta^{(r)}$}
		{
			\textbf{let} $\mathcal{C}_j^{(r)}=\{i\in\{1,\ldots,n\}\;:\; \theta_{i}^{(r)}=\theta_{j}^{*(r)}\}$;\\
			\textbf{update} $\theta_j^{*(r)}$ \textsl{from}
			$\P(\theta_j^{*(r)}\in \d t \mid \cdots ) \propto P_0(\d t) \prod_{i \in \mathcal{C}_j^{(r)}} \kernel(X_i; t)$\\
			}
	}
	\textbf{end}
	\caption{\label{algo:MARsampler} Marginal sampler for PY mixture model}
\end{algorithm}

\SetNlSty{textbf}{[}{]}
\begin{algorithm*}
	\DontPrintSemicolon
	\textbf{set}\textsl{ $k=1$, $c_i^{(0)}=1$ for any $i=1,2,\ldots,n$, and an admissible initial value $\tilde{\theta}_1^{(0)}$};\\
	\For{each iteration $r = 1,\dots,R$}    
	{ 
		\For{each $i = 1, \ldots, n$}{
			\textbf{sample} 
			$u_i \sim \mbox{Unif}([0, p_{c_i^{(r-1)}}])$;
		}
		
		\While{$\sum_{j = 1}^{k} w_j < 1 - u_i$, for any $i$ }
		{
			\vspace{0.1cm}
			\textbf{sample} \textsl{a new weight} $v_{k+1} \sim \mbox{Beta}(1 - \sigma, \vartheta + (k+1)\sigma)$;\\
			\textbf{set} $p_{k+1} = v_{k+1} \prod_{l < k+1} (1 - v_l)$;\\
			\textbf{sample} $\tilde{\theta}_{k+1}^{(r)} \sim P_0;$\\
			\textbf{set} $k = k + 1$;
			
		}
		\For{each $i=1,\ldots,n$}   
		{
			\textbf{sample} $c_i^{(r)}$ \textsl{from}
			\begin{align*}
			&\P(c_i^{(r)}= j \mid \cdots )
			\propto
			\begin{cases}
			\mathds{1}_{[p_j > u_i]} \kernel(X_i, \tilde{\theta}_j^{(r)}) &\text{ \textsl{if} }j \in \{1, \dots, k\}
			\\[8pt]
			0 &\text{ \textsl{otherwise}}
			\end{cases}
			\end{align*}
			
		}
		\textbf{set} $k=max(c_1, \dots, c_n)$;\\
		\For{each $j = 1, \dots k$}
		{
			\textbf{let} $\mathcal{C}_j^{(r)}$ \textsl{be the set of indexes having $c_i^{(r)}=j$};\\
			\textbf{update} $\tilde{\theta}_j^{(r)}$ \textsl{from}
			$\P(\tilde{\theta}_j^{(r)}\in \d t \mid \cdots ) \propto P_0(\d t) \prod_{i \in \mathcal{C}_j^{(r)}} \kernel(X_i; t);$\\
			\vspace{0.1cm}
			\textbf{sample} 
			$v_j \sim \mbox{Beta}\left(1 - \sigma + n_j, \vartheta + j\sigma + n_j^+ \right)$,\\ 
			\textsl{where $n_j$ is the cardinality of $\mathcal{C}_j^{(r)}$ and $n_j^+ = n - \sum_{l=1}^j n_j$};\\
			\vspace{0.1cm}
			\textbf{set} $p_j = v_j \prod_{l<j}(1 - v_l)$;
		}
	}
	\textbf{end}
	
	\caption{\label{algo:SLIsampler}Dependent slice-efficient sampler for PY mixture model}
\end{algorithm*}

\SetNlSty{textbf}{[}{]}
\begin{algorithm*}
	\DontPrintSemicolon
	\textbf{set}\textsl{ $k=1$, $c_i^{(0)}=1$ for any $i=1,2,\ldots,n$, and an admissible value $\tilde{\theta}_1^{(0)}$};\\
	\textbf{set}  $\xi_1=(1-\sigma)/(\vartheta+1)$;\\
	\For{each iteration $r = 1,\dots,R$}    
	{ 
		\For{each $i = 1, \ldots, n$}{
			\textbf{sample} 
			$u_i \sim \mbox{Unif}([0,\xi_{c_i^{(r-1)}}])$;
		}
		
		\While{$\sum_{j = 1}^{k} \xi_j < 1 - u_i$, for any $i$ }
		{
			\vspace{0.1cm}
			\textbf{set} $\xi_{k+1} = \xi_k \left(\vartheta + k \sigma\right)/\left(\vartheta + 1 + k \sigma\right)$;\\
			\textbf{sample} \textsl{a new weight} $v_{k+1} \sim \mbox{Beta}(1 - \sigma, \vartheta + (k+1)\sigma)$;\\
			\textbf{set} $p_{k+1} = v_{k+1} \prod_{l < k+1} (1 - v_l)$;\\
			\textbf{sample} $\tilde{\theta}_{k+1}^{(r)} \sim P_0;$\\
			\textbf{set} $k = k + 1$;
			
		}
		\For{each $i=1,\ldots,n$}   
		{
			\textbf{sample} $c_i^{(r)}$ \textsl{from}
			\begin{align*}
			&\P(c_i^{(r)}= j \mid \cdots ) \\
			&\propto
			\begin{cases}
			\mathds{1}_{[\xi_j > u_i]} \frac{p_j}{\xi_j} \kernel(X_i, \tilde{\theta}_j^{(r)}) &\text{ \textsl{if} }j \in \{1, \dots, k\}
			\\[8pt]
			0 &\text{ \textsl{otherwise}}
			\end{cases}
			\end{align*}
			
		}
		\textbf{set} $k=max(c_1, \dots, c_n)$;\\
		\For{each $j = 1, \dots k$}
		{
			\textbf{let} $\mathcal{C}_j^{(r)}$ \textsl{be the set of indexes having $c_i^{(r)}=j$};\\
			\textbf{update} $\tilde{\theta}_j^{(r)}$ \textsl{from}
			$\P(\tilde{\theta}_j^{(r)}\in \d t \mid \cdots ) \propto P_0(\d t) \prod_{i \in C_j^{(r)}} \kernel(X_i; t);$\\
			\vspace{0.1cm}
			\textbf{sample} 
			$v_j \sim \mbox{Beta}\left(1 - \sigma + n_j, \vartheta + j\sigma + n_j^+ \right)$,\\ 
			\textsl{where $n_j$ is the cardinality of $\mathcal{C}_j^{(r)}$ and $n_j^+ = n - \sum_{l=1}^j n_j$};\\
			\vspace{0.1cm}
			\textbf{set} $p_j = v_j \prod_{l<j}(1 - v_l)$;
		}
	}
	\textbf{end}
	
	\caption{\label{algo:SLI2sampler}Independent slice-efficient sampler for PY mixture model (with $\xi_j=\E[p_j]$ for $j=1,2,\ldots$)} 
	\end{algorithm*}

\clearpage

\bibliographystyle{apalike}
\begin{small}
\bibliography{ICSbib}
\end{small}

\end{document}